\definecolor{mydarkblue}{rgb}{0,0.08,0.45}
\definecolor{myblue}{HTML}{3b75c3}
\definecolor{myred}{HTML}{E33222}
\definecolor{mygreen}{HTML}{438773}
\definecolor{mymaroon}{RGB}{142,27,19}
\definecolor{maroon}{HTML}{800000}
\definecolor{mycite}{cmyk}{0.55,1,0,0.15}
\definecolor{codeblue}{rgb}{0.25,0.5,0.5}
\definecolor{codekw}{rgb}{0.85, 0.18, 0.50}
\definecolor{codegreen}{rgb}{0,0.6,0}
\definecolor{codegray}{rgb}{0.5,0.5,0.5}
\definecolor{codepurple}{rgb}{0.58,0,0.82}
\definecolor{backcolour}{rgb}{0.95,0.95,0.92}
\newcommand{\method}{\textsc{TAG-CF}\xspace}
\newcommand{\fullname}{\textbf{\underline{T}}est-time \textbf{\underline{Ag}}gregation for \textbf{\underline{C}}ollaborative \textbf{\underline{F}}iltering \xspace}
\newcommand{\methodp}{\textsc{TAG-CF$^+$}\xspace}
\newtheorem{theorem}{Theorem}
\newcommand\numberthis{\addtocounter{equation}{1}\tag{\theequation}}
\title{How Does Message Passing Improve \\ Collaborative Filtering?}
\author{%
  Clark Mingxuan Ju$^{1,2*}$, William Shiao$^{3}$, Zhichun Guo$^{2}$, Yanfang Ye$^{2}$,\\ 
  \textbf{Yozen Liu}$^{1}$, \textbf{Neil Shah}$^{1}$, \textbf{Tong Zhao}$^{1}$\thanks{Corresponding authors. Work done during first-author's internship at Snap Inc..} \\
  $^{1}$Snap Inc., $^{2}$University of Notre Dame, $^{3}$University of California, Riverside\\
  $^{1}$\texttt{\{mju,yliu2,nshah,tong\}@snap.com}, $^{2}$\texttt{\{mju2,zguo5,yye7\}@nd.edu}, $^{3}$\texttt{wshia002@ucr.edu} \\
}
\begin{document}

\maketitle

\begin{abstract}
Collaborative filtering (CF) has exhibited prominent results for recommender systems and been broadly utilized for real-world applications.
A branch of research enhances CF methods by message passing (MP) used in graph neural networks, due to its strong capabilities of extracting knowledge from graph-structured data, like user-item bipartite graphs that naturally exist in CF.
They assume that MP helps CF methods in a manner akin to its benefits for graph-based learning tasks in general (e.g., node classification). 
However, even though MP empirically improves CF, whether or not this assumption is correct still needs verification.
To address this gap, we formally investigate why MP helps CF from multiple perspectives and show that many assumptions made by previous works are not entirely accurate. 
With our curated ablation studies and theoretical analyses, we discover that \textit{\textbf{(i)} MP improves the CF performance primarily by additional representations passed from neighbors during the forward pass instead of additional gradient updates to neighbor representations during the model back-propagation and \textbf{(ii)} MP usually helps low-degree nodes more than high-degree nodes.}
Utilizing these novel findings, we present \fullname, namely \textbf{\method}, a test-time augmentation framework that only conducts MP once at inference time. 
The key novelty of \method is that it effectively utilizes graph knowledge while circumventing most of notorious computational overheads of MP. 
Besides, \method is extremely versatile can be used as a plug-and-play module to enhance representations trained by different CF supervision signals.
Evaluated on six datasets (i.e., five academic benchmarks and one real-world industrial dataset), \method consistently improves the recommendation performance of CF methods without graph by up to \textbf{39.2\%} on cold users and \textbf{31.7}\% on all users, with little to no extra computational overheads.
Furthermore, compared with trending graph-enhanced CF methods, \method delivers comparable or even better performance \textit{\underline{with less than \textbf{1\%} of their total training times}}.
Our code is publicly available at \url{https://github.com/snap-research/Test-time-Aggregation-for-CF}.
\end{abstract}

\section{Introduction}
Recommender systems are essential in improving users' experiences on web services, such as product recommendations~\citep{wang2021dcn,schafer1999recommender}, video recommendations~\citep{gomez2015netflix,van2013deep}, friend suggestions~\citep{sankar2021graph,ying2018graph}, etc.
In particular, recommender systems based on collaborative filtering (CF)
have shown superior performance~\citep{rendle2009bpr,koren2021advances,chen2023bias}. 
CF methods use preferences for items by users to predict additional topics or products a user might like~\citep{su2009survey}.
These methods typically learn a unique representation for each user/item and an item is recommended to a user according to their representation similarities~\citep{he2017neural,wang2015collaborative}.

One popular line of research explores Graph Neural Networks (GNNs) for CF, exhibiting improved results compared with CF frameworks without the utilization of graphs~\citep{wu2022graph,he2020lightgcn,wang2019neural,yu2022graph,cai2023lightgcl,wang2020disentangled,zhao2024learning}. 
The key mechanism behind GNNs is message passing (MP), where each node aggregates information from its neighbors in the graph, and information from neighbors that are multiple hops away can be acquired by stacked MP layers~\citep{kipf2016semi,velivckovic2017graph,hamilton2017inductive}.
During the model training, traditional CF methods directly fetch user/item representations of an observed interaction (e.g., purchase, friending, click, etc.) and enforce their pair-wise similarity~\citep{rendle2009bpr}. 
Graph-enhanced CF methods extend this scheme by conducting stacked MP layers over the user-item bipartite graph, and harnessing the resulting user and item representations to calculate a pair-wise affinity.

A recent study~\citep{he2020lightgcn} shows that removing several key components of the MP layer (e.g., learnable transformation parameters) greatly enhances GNNs' performance for CF.
Its proposed method (LightGCN) achieves promising performance by linearly aggregating neighbor representations and has been used as the de facto backbone model for later works due to its simple and effective design~\citep{cai2023lightgcl,yu2022graph,wu2021self}. 
However, this observation contradicts GNN architectures for classic graph learning tasks, where GNNs without these components severely under-perform~\citep{oloulade2021graph,wang2021bag}.
Additionally, existing research~\citep{he2020lightgcn,wang2019neural} assumes that the contribution of MP for CF is similar to that for graph learning tasks in general (e.g., node classification or link prediction)
- they posit that node representations are progressively refined by their neighbor information and the performance gain is positively proportional to the neighborhood density as measured in node degrees~\citep{tang2020investigating}. 
However, according to our empirical studies in \cref{sec:reason_degree}, MP in CF improves low-degree nodes more than high-degree nodes, which also contradicts GNNs' behaviors for classic tasks~\citep{tang2020investigating,hu2022tuneup}. 
In light of these inconsistencies, we ask: 
\begin{center}
\textbf{\textit{What role does message passing really play for collaborative filtering?}}
\end{center}
In this work, we investigate contributions brought by MP for CF from two perspectives. 
Firstly, we unroll the formulation of MP layer and show that its performance improvement could either come from additional representations passed from neighbors during the forward pass or accompanying gradient updates to neighbor representations during the back-propagation. 
With rigorously designed ablation studies, we empirically demonstrate that gains brought by the forward pass dominate those by the back-propagation. 
Furthermore, we analyze the performance distribution w.r.t. the user degree (i.e., the number of interactions per user) with or without message passing and discover that the message passing in CF improves low-degree users more compared to high-degree users.
For the first time, we connect this phenomenon to Laplacian matrix learning~\citep{zhu2021interpreting,dong2019learning,dong2016learning}, and theoretically show that popular supervision signals~\citep{rendle2009bpr,wang2022towards} for CF inadvertently conduct message passing in the back-propagation even without treating the input data as a graph.  
Hence, when message passing is applied, high-degree users demonstrate limited improvement, as the benefit of message passing for high degree nodes has already been captured by the supervision signal. 

With the above takeaways, we present \fullname, namely \textbf{\method}.
Specifically, unlike other graph CF methods, \method \emph{does not require any message passing during training}.
Instead, it is a test-time augmentation framework that only conducts a \emph{single message-passing step at inference time}, and effectively enhances representations inferred from different CF supervision signals. 
The test-time design is inspired by our first perspective that, within total performance gains brought by message passing, gains from the forward pass dominate those brought by the backward pass. 
Applying message passing only at test time avoids repetitive queries (i.e., once per node and epoch) for representations of surrounding nodes, which grow exponentially as the number of layers increases.
Moreover, following our second perspective that message passing helps low-degree nodes more in CF, we further offload the cost of \method by applying the one-time message passing only to low-degree nodes.
We summarize our contributions as:
\vspace{-0.06in}
\begin{itemize}[leftmargin=*]
    \item This is the first work that formally investigates why message passing helps collaborative filtering.
    We demonstrate that message passing in CF improves the performance primarily by additional representations passed from neighbors during the forward pass instead of accompanying gradient updates to neighbors during the back-propagation, and prove that message passing helps low-degree nodes more than high-degree nodes. 
    
    \item Given our findings, we propose \method, an efficient yet effective test-time aggregation framework to enhance representations inferred by different CF supervision signals such as BPR and DirectAU.
    Evaluated on six datasets, \method consistently improves the performance of CF methods without graph by up to \textbf{39.2\%} on cold users and \textbf{31.7}\% on all users, with little to no extra computational overheads.
    Furthermore, compared with trending graph-enhanced CF methods, \method delivers comparable or even better performance \textit{\underline{with less than \textbf{1\%} of their total training time}}.
    
    \item Beside promising cost-effectiveness, we show that test-time aggregation in \method improves the recommendation performance in similar ways as the training-time aggregation does, further demonstrating the legitimacy of our findings. 
\end{itemize}
\section{Preliminaries and Related Work}
\label{sec:pre}
\textbf{Collaborative Filtering}. 
Given a set of users, a set of items, and interactions between users and items, collaborative filtering (CF) methods aim at learning a unique representation for each user and item, such that user and item representations can reconstruct all observable interactions~\citep{rendle2009bpr,wang2022towards,koren2009matrix}.
CF methods based on matrix factorization directly utilize the inner product between a pair of user and item representations to infer the existence of their interaction~\citep{koren2009matrix,rendle2009bpr}. 
CF methods based on neural predictors use multi-layer feed-forward neural networks that take user and item representations as inputs and output prediction results~\citep{he2017neural,zhang2019deep}.
Let $\mathcal{U}$ and $\mathcal{I}$ denote the user set and item set respectively, with user $u_i \in \mathcal{U}$ associated with an embedding $\mathbf{u}_i \in \mathbb{R}^{d}$ and item $i_i \in \mathcal{U}$ associated with $\mathbf{i}_i \in \mathbb{R}^{d}$, the similarity $s_{ij}$ between user $u_i$ and item $i_j$ is formulated as $ s_{ij} = \hat{\mathbf{u}}_i^\intercal \cdot \hat{\mathbf{i}}_j$.

\noindent\textbf{Graph Neural Networks}. 
Graph neural networks (GNNs) are powerful learning frameworks to extract representative information from graphs~\citep{kipf2016semi,velivckovic2017graph,hamilton2017inductive,xu2018representation,fan2022heterogeneous,ju2022grape}, with numerous applications in large-scale ranking and forecasting tasks \citep{tang2020knowing, tang2022friend, derrow2021eta, shi2023embedding}.
They aim to map each input node into low-dimensional vectors, which can be utilized to conduct either graph-level~\citep{xu2018powerful} or node-level tasks~\citep{kipf2016semi}.  
Most GNNs explore layer-wise message passing~\citep{gilmer2017neural}, where each node iteratively extracts information from its first-order neighbors, and information from multi-hop neighbors can be captured by stacked layers.
Given a graph $\mathcal{G}=(\mathcal{V}, \mathcal{E})$ and node features $\mathbf{X} \in \mathbb{R}^{|\mathcal{V}|\times d}$, graph convolution in GCN~\citep{kipf2016semi} at $k$-th layer is formulated as:
\begin{equation}
\label{eq:gcn}
    \mathbf{h}_i^{(k+1)} = \sigma(\sum_{j \in \mathcal{N}(i)\cup{i}} \frac{1}{\sqrt{|N(i)|} \sqrt{|N(j)|}}\mathbf{h}_j^{(k)} \cdot \mathbf{W}^{(k)}),
\end{equation}
where $\mathbf{h}_i^{0} = \mathbf{x}_i$, $\mathcal{N}(i)$ refers to the set of direct neighbors of node $i$, and $\mathbf{W}^{(k)} \in \mathbb{R}^{d^k \times d^{(k+1)}}$ refers to parameters at the $k$-th layer transforming the node representation from $d^k$ to $d^{(k+1)}$ dimension.

Recent works~\citep{ma2021homophily,ma2021unified} have shown that GNNs make predictions based on the distribution of node neighborhoods.
Moreover, GNNs' performance improvement for high-degree nodes is typically better than for low-degree nodes~\citep{tang2020investigating,hu2022tuneup,ju2024graphpatcher,guo2024node, wang2024topological}.
They posit that node representations are progressively refined by their neighbor information and the performance gain is positively proportional to the neighborhood density as measured in node degrees. 
As we explore test-time augmentation in this work, it is worth noting that there also exist a group of relevant works that explore data augmentation techniques to enhance the GNN performance~\citep{ju2024graphpatcher,zhao2021data,zhao2022graph,ju2023let, jin2022empowering}. 

\noindent\textbf{Message Passing for Collaborative Filtering}.
Recent research tends to apply the message passing scheme in GNNs to CF~\citep{he2020lightgcn,wang2019neural,pal2020pinnersage,gao2022graph,shi2020heterogeneous,xia2021graph,kolodner2024robust,kung2024improving}. 
In CF, they mostly conduct message passing between user-item bipartite graphs and utilize the resultant representations to calculate user-item similarities.
For instance, NGCF~\citep{wang2019neural} directly migrates the message passing scheme in GNNs (similar to \cref{eq:gcn}) and applies it to bipartite graphs in CF. 
LightGCN~\citep{he2020lightgcn} simplifies NGCF~\citep{wang2019neural} by removing certain components (i.e., the self-loop, learning parameters for graph convolution, and activation functions) and further improves the recommendation performance compared with NGCF. 
The simplified parameter-less message passing in LightGCN can be expressed as: 
\begin{align}
    \mathbf{u}_i^{(k)} = \sum_{i_j \in N(u_i)} \frac{1}{\sqrt{|N(u_i)|} \sqrt{|N(i_j)|}} \mathbf{i}_j^{(k-1)},
    \mathbf{i}_i^{(k)} = \sum_{u_j \in N(i_i)} \frac{1}{\sqrt{|N(i_i)|} \sqrt{|N(u_j)|}} \mathbf{u}_j^{(k-1)},
    \label{eq:lgcn_1}
\end{align}
where $N(\cdot)$ refers to the set of items or users that the input interacts with, $\mathbf{u}_i^{(0)} = \mathbf{u}_i$, and $\mathbf{i}_i^{(0)} = \mathbf{i}_i$.
With $K$ layers, the final user/item representations and their similarities are constructed as:
\begin{equation}
    \hat{\mathbf{u}}_i = \frac{1}{K+1}\sum_{k=0}^K \mathbf{u}_i^{(k)}, \;\;
    \hat{\mathbf{i}}_i = \frac{1}{K+1}\sum_{k=0}^K \mathbf{i}_i^{(k)}, \;\;
    s_{ij} = \hat{\mathbf{u}}_i^\intercal \cdot \hat{\mathbf{i}}_j.
     \label{eq:lgcn_2}
\end{equation}
According to results reported in LightGCN and NGCF~\citep{he2020lightgcn,wang2019neural,chang2021sequential,gao2023survey} and empirical studies we provide in this work (i.e., \cref{tab:mftagcf} and \cref{tab:bprdau}), incorporating message passing to CF methods without graphs (i.e., matrix factorization methods~\citep{rendle2009bpr,he2017neural}) can improve the recommendation performance by up to 20\%. 
Utilizing LightGCN as the backbone model, later works try to further improve the performance by incorporating self-supervised learning signals~\citep{lin2022improving,yu2022graph,cai2023lightgcl,yu2021self,wei2021contrastive,ju2022multi}.
Graph-based CF methods assume that the contribution of message passing for CF is similar to that for graph learning tasks in general (e.g., node classification or link prediction).
However, whether or not this assumption is correct still needs verification, even though message passing empirically improves CF. 
There also exists a branch of research that aims at accelerating or simplifying message passing in CF by adding graph-based regularization terms during the training~\citep{shen2021powerful,mao2021ultragcn,peng2022svd,xia2023graph}. While promising, they still repetitively query representations of adjacent nodes during the training.

\textbf{Efficient Efforts in Matrix Factorization}. A branch of research specifically focuses on improving the efficiency of matrix factorization~\citep{shen2021powerful,peng2022svd,hong2024svd,park2024turbo,choi2023blurring}. For instance, GFCF~\citep{shen2021powerful} and Turbo-CF~\citep{park2024turbo} explore graph signal processing to linearly convolve the interaction matrix and use the resulted matrix directly for recommendation without training. Furthermore, SVD-GCN~\citep{peng2022svd} and SVD-AE~\citep{hong2024svd} utilize a low rank version of the interaction matrix to further accelerate the convolution efficiency and yet remain the promising performance. Besides, BSPM~\citep{choi2023blurring} studies using diffusion process to gradually reconstruct the interaction matrix and achieves promising performance with fast processing. In parallel with these existing efforts, we propose to enhance any existing matrix factorization method through test-time augmentation that harnesses graph-based heuristics.
\section{How Does Message Passing Improve Collaborative Filtering?} \label{sec:howmp} 
In this section, we demonstrate why message passing (MP) helps collaborative filtering from two major perspectives: 
Firstly, we focus on inductive biases brought by the MP explored in LightGCN, the de facto backbone model for graph-based CF methods. 
Secondly, we consider the performance improvement on different node subgroups w.r.t. the node degree with and without MP. 

\vspace{-0.05in}
\subsection{Neighbor Information vs. Accompanying Gradients from Message Passing} \label{sec:reason_mp}
Following the definition in \cref{eq:lgcn_1}, given a one-layer LightGCN\footnote{For the simplicity of the notation, we showcase our observation with only one layer. However, since LightGCN is fully linear, the phenomenon we show also applies to variants with arbitrary layers.}, we unroll the calculation of the similarity $s_{ij}$ between any user $u_i$ and item $i_j$  as the following:
\begin{align*}
    & s_{ij}  =  \Big(\mathbf{u}_i + \sum_{i_n \in N(u_i)} \frac{1}{\sqrt{|N(u_i)|} \sqrt{|N(i_n)|}} \mathbf{i}_n\Big)^\intercal \cdot \Big(\mathbf{i}_j + \sum_{u_n \in N(i_j)} \frac{1}{\sqrt{|N(i_j)|} \sqrt{|N(u_n)|}} \mathbf{u}_n\Big) \\
    & \;\;\; =\; \mathbf{u}_i^\intercal \cdot \mathbf{i}_j + 
    \sum_{u_n \in N(i_j)} \frac{1}{\sqrt{|N(i_j)|} \sqrt{|N(u_n)|}} \mathbf{u}_i^\intercal\cdot\mathbf{u}_n 
     \sum_{i_n \in N(u_i)} \frac{1}{\sqrt{|N(u_i)|} \sqrt{|N(i_n)|}} \mathbf{i}_n^\intercal\cdot \mathbf{i}_j \\
    & \;\;\; + \sum_{i_n \in N(u_i)} \sum_{u_n \in N(i_j)} \frac{1}{\sqrt{|N(u_i)|} \sqrt{|N(i_n)|} \sqrt{|N(i_j)|} \sqrt{|N(u_n)|}} \mathbf{i}_n^\intercal\cdot \mathbf{u}_n.\numberthis \label{eq:lgcn_unroll}
\end{align*}
With derived similarities between user-item pairs, their corresponding representations can be updated by objectives (e.g., BPR~\citep{rendle2009bpr} and DirectAU~\citep{wang2022towards}) that enforce the pair-wise similarity between representations of user-item pairs in the training data. 

CF methods without the utilization of graphs directly calculate the similarity between a user and an item with their own representations (i.e.,  $s_{ij}=\mathbf{u}_i^\intercal \cdot \mathbf{i}_j$), which aligns with the first term in \cref{eq:lgcn_unroll}.
Compared to the formulation in \cref{eq:lgcn_unroll}, we can see that three additional similarity terms are introduced as inductive biases: similarities between users who purchase the same item (i.e., $\mathbf{u}_i^\intercal\cdot\mathbf{u}_n$), between items that share the same buyer (i.e., $\mathbf{i}_n^\intercal\cdot \mathbf{i}_j$), and between neighbors of an observed interaction (i.e., $\mathbf{i}_n^\intercal\cdot \mathbf{u}_n$). 
With these three additional terms from MP, we reason that the performance improvement brought by MP to CF methods without graph could come from {\bf (i)} additional neighbor representations during the forward pass (i.e., numerical values of three extra terms in \cref{eq:lgcn_unroll}), or {\bf (ii)} accompanying gradient updates to neighbors during the back-propagation.

To investigate the origin of the performance improvement brought by MP, we designed two variants of LightGCN. 
The first one (LightGCN$_\text{w/o neigh. info}$) shares the same forward and backward procedures as LightGCN during the training
but does not conduct MP during the test time. 
In this variant, additional gradients brought by MP are maintained as part of the resulting model, but information from neighbors are ablated. 
In the second variant (LightGCN$_\text{w/o grad.}$), the model shares the same forward pass but drops gradients from these three additional terms during the backward propagation. 
Besides these two variants, we also experiment on LightGCN without MP, denoted as LightGCN$_\text{w/o both}$, a matrix factorization model with the same supervision signal (i.e., BPR loss).
Implementation details w.r.t. this experiment are in \cref{app:ep}. 
\begin{wraptable}{r}{.5\textwidth}
    \vspace{-0.1in}
    \captionof{table}{Performance of LightGCN variants.} 
    \vspace{-0.1in}
    \label{tab:contvsgrad}
    \resizebox{.5\columnwidth}{!}{
    \begin{tabular}{c|ccc}
    \toprule
    Method & \texttt{Yelp-2018} & \texttt{Gowalla} & \texttt{Amazon-book} \\ 
    \midrule
    \multicolumn{4}{c}{NDCG@20} \\
    \cmidrule(r){1-4} 
    LightGCN         & 6.36  & 9.88 & 8.13 \\
    \;\; w/o grad.   & 6.16 \;(3.1\%$\downarrow$) & 9.87 \;(0.1\%$\downarrow$) & 7.80 \;(4.1\%$\downarrow$)\\
    \;\; w/o neigh. info   & 4.71 (25.9\%$\downarrow$) & 6.95 (29.7\%$\downarrow$) & 6.95 (14.5\%$\downarrow$)\\
    \;\; w/o both    & 6.09 \;(4.2\%$\downarrow$) & 9.83 \;(0.5\%$\downarrow$) & 7.75 \;(4.7\%$\downarrow$)\\
    \cmidrule(r){1-4} 
    \multicolumn{4}{c}{Recall@20} \\
    \cmidrule(r){1-4} 
    LightGCN         & 11.21  & 18.53 & 12.97 \\
    \;\; w/o grad.   & 10.87 \;(3.0\%$\downarrow$) & 18.51 \;(0.1\%$\downarrow$) & 12.81 \;(1.2\%$\downarrow$)\\
    \;\; w/o neigh. info   & 8.44 (24.7\%$\downarrow$) & 13.06 (29.5\%$\downarrow$) & 11.25 (13.3\%$\downarrow$)\\
    \;\; w/o both    & 10.71 \;(4.5\%$\downarrow$) & 18.42 \;(0.6\%$\downarrow$) & 12.57 \;(3.1\%$\downarrow$)\\
    \bottomrule
    \end{tabular}
    }
    \vspace{-0.1in}
\end{wraptable}
From \cref{tab:contvsgrad}, we observe that the performance of all variants 
is downgraded compared with LightGCN,
with the most significant degradation on LightGCN$_\text{w/o neigh. info}$. 
This phenomenon indicates that {\bf (i)} both additional representations passed from neighbors during the forward pass and accompanying gradient updates to neighbors during the back-propagation help the recommendation performance, and {\bf (ii)} within total performance gains brought by MP, gains from the forward pass dominate those brought by the back-propagation. 
Comparing LightGCN with LightGCN$_\text{w/o grad.}$, we notice that the incorporation of gradient updates brought by MP is relatively incremental (i.e., $\sim$2\%).
However, to facilitate these additional gradient updates for slightly better performance, LightGCN is required to conduct MP at each batch, which brings tremendous additional overheads. 

\subsection{Message Passing in CF Helps Low-degree Users More Compared with High-degrees} \label{sec:reason_degree}

Both empirical and theoretical evidence have demonstrated that GNNs usually perform satisfactorily on high-degree nodes with rich neighbor information but not as well on low-degree nodes~\citep{tang2020investigating,hu2022tuneup}.
While designing graph-based model architectures for CF, most existing methods directly borrow this line of observations~\citep{wang2019neural,he2020lightgcn} and assume that the contribution of message passing for CF is similar to that for graph learning tasks in general.
However, whether or not these observations still transfer to message passing in CF remains questionable, as there exist architectural and philosophical gaps between message passing for CF and its counterparts for GNNs, as discussed in \cref{sec:pre}.
To validate these hypotheses, we conduct experiments over representative methods (i.e., LightGCN and matrix factorization (MF) trained with BPR) and show their performance w.r.t. the node degree in \cref{fig:degree}.

We observe that, overall both MF and LightGCN perform better on high-degree users than low-degree users. 
According to the upper two figures in \cref{fig:degree}, MF behaves similarly to LightGCN, even without treating the input data as graphs, where the overall performance for high-degree user is stronger than that for low-degree users.
However, the performance improvement of LightGCN from MF on low-degree users is larger than that for high-degree users (i.e., lower two figures in \cref{fig:degree}).
According to literature in general graph learning tasks~\citep{hu2022tuneup,liu2021tail,tang2020investigating}, the performance improvement should be positively proportional to the node degree - the gain for high-degree users should be higher than that for low-degree users.
This discrepancy indicates that it might not be appropriate to accredit contributions of message passing in CF directly through ideologies designed for classic graph learning tasks (e.g., node classification and link prediction).
\begin{figure}[t]
    \vspace{-0.4in}
    \resizebox{1.0\columnwidth}{!}{
    \includegraphics{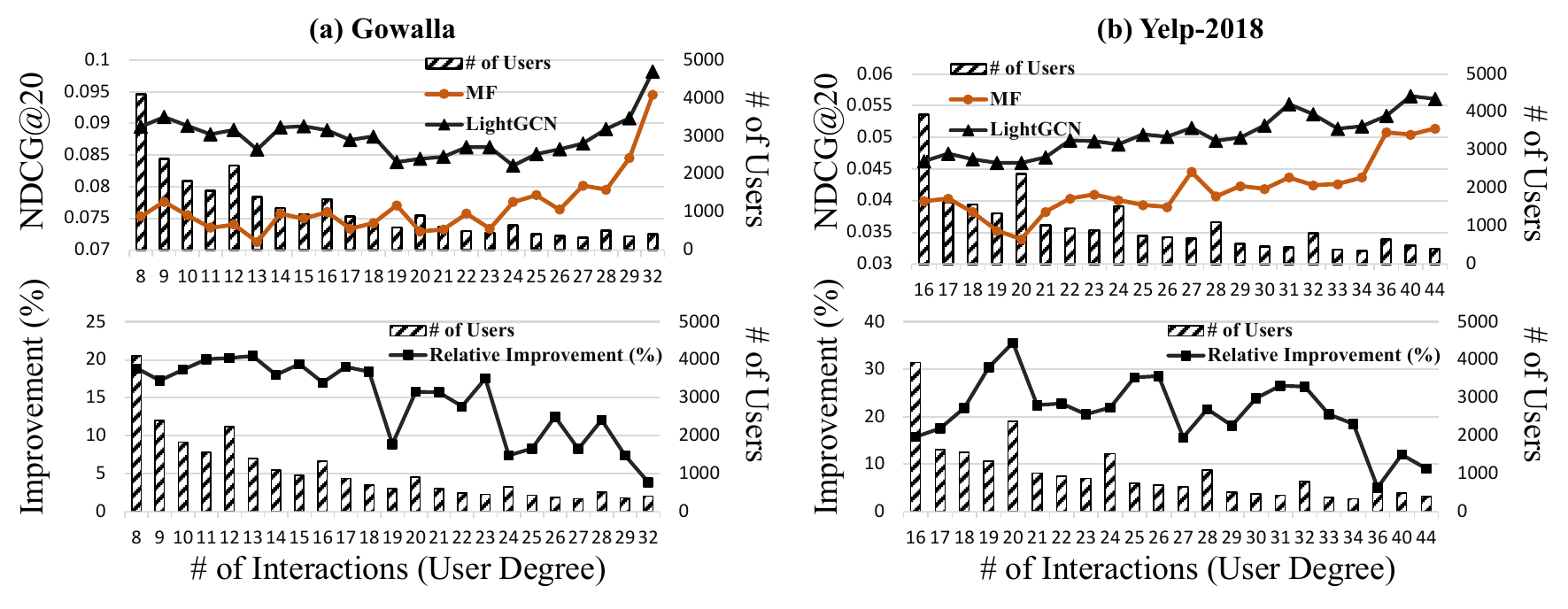}
    }
    \vspace{-0.3in}
    \caption{Performances of LightGCN and
Matrix Factorization w.r.t. the user degree across datasets.
The performance improvement brought by message passing decreases as the user degree goes up.
}
    \label{fig:degree}
    \vspace{-0.15in}
\end{figure}
To bridge this gap, we connect supervision signals (i.e., BPR and DirectAU) commonly adopted by CF methods to Laplacian matrix learning. 
The formulation of BPR~\citep{rendle2009bpr} and DirectAU~\citep{wang2022towards} without the incorporation of graphs can be written as:
\begin{align*}
     \mathcal{L}_{\text{BPR}} = - \sum_{(i,j)\in\mathcal{D}}\sum_{(i,k)\notin\mathcal{D}}\log\sigma(s_{ij}-s_{ik})= 
     -\sum_{(i,j)\in\mathcal{D}}\sum_{(i,k)\notin\mathcal{D}}\log\sigma(\mathbf{u}_i^\intercal\cdot\mathbf{i}_j - \mathbf{u}_i^\intercal\cdot\mathbf{i}_k), \\
    \mathcal{L}_{\text{DirectAU}}  =  \sum_{(i,j)\in\mathcal{D}} ||\mathbf{u}_i - \mathbf{i}_j||^2 + \sum_{u, u' \in \mathcal{U}}\log e^{-2||\mathbf{u}-\mathbf{u}'||^2}  + 
    \sum_{i, i' \in \mathcal{I}}\log e^{-2||\mathbf{i}-\mathbf{i}'||^2},\numberthis
\end{align*}
where $\mathcal{D}$
refers to the set of observed interactions at the training phase and $ \mathbf{i'}$ and $ \mathbf{u'}$ refers to any random user/item.
According to works on Laplacian matrix learning~\citep{zhu2021interpreting,dong2019learning, ma2021unified}, learning node representations over graphs can be decoupled into Laplacian quadratic form, a weighted summation of two sub-goals:
\begin{equation}
    \min_{\mathbf{Z}}\{||\mathbf{Z}-\mathbf{X}||^2+\text{tr}(\mathbf{Z}^\intercal\mathbf{L}\mathbf{Z})\},
    \label{eq:laplacian}
\end{equation}
where $\mathbf{Z}$ refers to the node representation matrix after the message passing, $\mathbf{X}$ refers to the input feature matrix, and $\mathbf{L}$ refers to the Laplacian matrix. 
The first term regularizes the latent representation such that it does not diverge too much from the input feature; whereas the second term promotes the similarity between latent representations of adjacent nodes, which can be re-written as:
$\text{tr}(\mathbf{Z}^\intercal\cdot\mathbf{L}\cdot\mathbf{Z}) = \sum_{(i,j)\in \mathcal{D}} ||\mathbf{u}_i - \mathbf{i}_j||^2$ in CF bipartite graphs. \cite{zhu2021interpreting} show that $K$ layers of linear message passing exactly optimizes the second term in \cref{eq:laplacian}.
Given this theoretical foundation, we derive the following theorem w.r.t. relations between BPR, DirectAU, and message passing in CF:
\begin{theorem}
\label{thm:bprdau}
Assuming that $||\mathbf{u}_i||^2 = ||\mathbf{i}_j||^2 =1$ for any $u_i \in \mathcal{U}$ and  $I_j \in \mathcal{I}$, objectives of BPR and DirectAU are strictly upper-bounded by the objective of message passing (i.e., $\mathcal{L}_{\text{BPR}} \leq \sum_{(i,j)\in \mathcal{D}} ||\mathbf{u}_i - \mathbf{i}_j||^2$ and $\mathcal{L}_{\text{DirectAU}} \leq \sum_{(i,j)\in \mathcal{D}} ||\mathbf{u}_i - \mathbf{i}_j||^2$). 
\end{theorem}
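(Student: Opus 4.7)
The plan is to prove the two upper bounds separately, handling the DirectAU case first since it reduces almost to a one-line observation and then attacking the BPR case, which requires genuinely translating a log-sigmoid expression into squared Euclidean distances via the unit-norm hypothesis.

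For the DirectAU bound, I would simply exploit the fact that the second and third summations in $\mathcal{L}_{\text{DirectAU}}$ are trivially non-positive: for any $\mathbf{u},\mathbf{u}'$ we have $\log e^{-2\|\mathbf{u}-\mathbf{u}'\|^{2}} = -2\|\mathbf{u}-\mathbf{u}'\|^{2}\le 0$, and analogously for items. The first summation is, by construction, already the message-passing objective $\sum_{(i,j)\in\mathcal{D}}\|\mathbf{u}_i-\mathbf{i}_j\|^{2}$ appearing in the Laplacian quadratic form of Equation~\ref{eq:laplacian}. Adding two non-positive quantities to it produces an upper bound equal to the message-passing objective itself, which is exactly the claim. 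The unit-norm hypothesis is actually not needed for this half; I would still state it for symmetry with the BPR case.

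For the BPR bound, the central move is to use $\|\mathbf{u}_i\|^{2}=\|\mathbf{i}_j\|^{2}=1$ to rewrite inner products as squared distances via the polarization-style identity $\mathbf{u}_i^{\intercal}\mathbf{i}_j = 1-\tfrac{1}{2}\|\mathbf{u}_i-\mathbf{i}_j\|^{2}$. Substituting into the sigmoid argument turns each term into
\[
-\log\sigma(s_{ij}-s_{ik}) \;=\; \log\!\bigl(1+\exp\{\tfrac{1}{2}(\|\mathbf{u}_i-\mathbf{i}_j\|^{2}-\|\mathbf{u}_i-\mathbf{i}_k\|^{2})\}\bigr).
\]
From here I would apply the standard softplus inequality $\log(1+e^{y})\le \max(0,y)+\log 2$ (or, if a tighter linear bound is available in the regime $|y|\le 2$ forced by unit norms, use that), isolate the $\|\mathbf{u}_i-\mathbf{i}_j\|^{2}$ contribution, and push the residual involving $\|\mathbf{u}_i-\mathbf{i}_k\|^{2}$ and any constants into a non-positive remainder after summing over $(i,j)\in\mathcal{D}$ and $(i,k)\notin\mathcal{D}$. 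The intended conclusion is that each summand is dominated by a multiple of $\|\mathbf{u}_i-\mathbf{i}_j\|^{2}$, which aggregates to the message-passing objective.

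The hard part will be controlling the constant term and the negative-sample contribution in the BPR half. A naive application of $-\log\sigma(x)\le \log 2 + \max(0,-x)$ leaves one $\log 2$ per pair and a residual $\tfrac12\|\mathbf{u}_i-\mathbf{i}_k\|^{2}$ that grows with the number of sampled negatives, so the double summation over $(i,k)\notin\mathcal{D}$ has to either be interpreted as a single sampled negative per positive (as is standard in BPR implementations) or be explicitly absorbed using $\|\mathbf{u}_i-\mathbf{i}_k\|^{2}\ge 0$ to guarantee the residual is non-positive. Once those book-keeping issues are resolved, the two bounds together justify interpreting both BPR and DirectAU as implicitly performing a Laplacian-smoothing step of the kind that message passing makes explicit, which is the conceptual payoff used later in Section~\ref{sec:reason_degree}.
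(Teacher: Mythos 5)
Your DirectAU half is exactly the paper's argument: the two uniformity terms satisfy $\log e^{-2\|\cdot\|^2}\le 0$, so they can simply be dropped, and you are right that the unit-norm hypothesis plays no role in that half. The gap is in the BPR half. You begin the same way the paper does, with the polarization identity $\mathbf{u}_i^\intercal\mathbf{i}_j = 1-\tfrac{1}{2}\|\mathbf{u}_i-\mathbf{i}_j\|^2$, but then you switch to the softplus bound $\log(1+e^y)\le \log 2+\max(0,y)$, and that route does not close. After discarding $-\tfrac{1}{2}\|\mathbf{u}_i-\mathbf{i}_k\|^2$, each triple $(i,j,k)$ contributes $\log 2+\tfrac{1}{2}\|\mathbf{u}_i-\mathbf{i}_j\|^2$, and there is no way to absorb the additive $\log 2$ into $\tfrac{1}{2}\|\mathbf{u}_i-\mathbf{i}_j\|^2$: observed pairs can have $\|\mathbf{u}_i-\mathbf{i}_j\|^2<2\log 2$ (training drives them toward $0$), in which case $\log 2+\tfrac{1}{2}\|\mathbf{u}_i-\mathbf{i}_j\|^2>\|\mathbf{u}_i-\mathbf{i}_j\|^2$. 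So what you call book-keeping is the substantive obstruction; no constant-chasing inside the softplus route recovers the stated inequality.

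What the paper does instead, after the same polarization step, is decompose $-\log\sigma(s_{ij}-s_{ik}) = -\mathbf{u}_i^\intercal\mathbf{i}_j+\log\bigl(e^{\mathbf{u}_i^\intercal\mathbf{i}_j}+e^{\mathbf{u}_i^\intercal\mathbf{i}_k}\bigr)$ and use the unit-norm hypothesis a \emph{second} time: by Cauchy--Schwarz the inner products are at most $1$, so $e^{\mathbf{u}_i^\intercal\mathbf{i}_j}+e^{\mathbf{u}_i^\intercal\mathbf{i}_k}\le 2e$ and the log term is bounded by a constant that is then claimed to cancel the $-1$ produced by polarization, leaving $\tfrac{1}{2}\|\mathbf{u}_i-\mathbf{i}_j\|^2\le\|\mathbf{u}_i-\mathbf{i}_j\|^2$ per term. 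That is the idea your sketch is missing: the residual has to be controlled through the boundedness of the inner products, not through the softplus inequality. (For completeness, the paper's own cancellation is delicate --- it requires $\log(2e)\le 1$, which holds only if the logarithm is taken in a base larger than $2e$, and the paper also passes silently over the double sum over negatives, an issue you rightly flag --- but these are weaknesses of the paper's write-up, not repairs available within your softplus route.)
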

Proof of \cref{thm:bprdau} can be found in \cref{app:proof}. 
According to \cref{thm:bprdau}, both BPR and DirectAU optimize the objective of message passing (i.e., $\sum_{(i,j)\in \mathcal{D}} ||\mathbf{u}_i - \mathbf{i}_j||^2$) with some additional regularization (i.e., dissimilarity between non-existing user/item pairs for BPR, and representation uniformity for DirectAU).
Hence, directly optimizing these two objectives partially fulfills the effects brought by message passing during the back-propagation. 

Combining this theory with the aforementioned empirical observations, we show that these two supervision signals could inadvertently conduct message passing in the backward step, even without explicitly treating interaction data as graphs. 
Since this inadvertent message passing happens during the back-propagation, its performance is positively correlated to the amount of training signals a user/item can get. In the case of CF, the amount of training signals for a user is directly proportional to the node degree. High-degree active users naturally benefit more from the inadvertent message passing from objective functions, because they acquire more training signals. Hence, when explicit message passing is applied to CF methods, the performance gain for high-degree users is less significant than that for low-degree users. Because the contribution of the message passing over high-degree nodes has been mostly fulfilled by the inadvertent message passing during the training. 

To quantitatively prove this theory, we incrementally upsample low-degree training users and observe the performance improvement that TAG-CF could introduce at each upsampling rate.
If our line of theory is correct, then we should expect less performance improvement on low-degree users for a larger upsampling rate. 
The results are shown in \cref{sec:upsample} with supporting evidence.

\section{Test-time Aggregation for Collaborative Filtering}
In \cref{sec:howmp}, we demonstrate why message passing helps CF from two perspectives. Firstly, w.r.t. the formulation of LightGCN, we observe that the performance gain brought by neighbor information dominates that brought by additional gradients. 
Secondly, w.r.t. the improvement on user subgroups, we learn that message passing helps low-degree users more, compared with high-degree users. 

In light of these two takeaways, we present \textbf{\underline{T}}est-time \textbf{\underline{Ag}}gregation for \textbf{\underline{C}}ollaborative \textbf{\underline{F}}iltering, namely \textbf{\method}, a test-time augmentation framework that only conducts message passing once at inference time and is effective at enhancing  matrix factorization methods
trained by different CF supervision signals.
Given a set of well-trained user/item representations, \method simply aggregates neighboring item (user) representations for a given user (item) at test time.  
Despite its simplicity, we show that our proposal can be used as a plug-and-play module and is effective at enhancing representations trained by different CF supervision signals.

The test-time aggregation is inspired by our first perspective that, within total performance gains brought by message passing, gains from additional neighbor representations during the forward pass dominate those brought by accompanying gradient updates to neighbors during the back-propagation.
Applying message passing only at test time 
avoids repetitive training-time queries (i.e., once per node and epoch) of surrounding neighbors, which grow exponentially as the number of layers increases by the neighbor explosion phenomenon \citep{guo2023linkless, zhang2021graph, zeng2021decoupling}.
Specifically, given a set of well-trained user and item representations $\mathbf{U} \in \mathbb{R}^{|\mathcal{U}| \times d}$ and $\mathbf{I}\in \mathbb{R}^{|\mathcal{I}| \times d}$, \method augments representations for user $u_i$ and item $i_i$ as:
\begin{equation}\label{eq:tagcf}
    \begin{aligned}
         &\mathbf{u}_i^* = \mathbf{u}_i + \sum_{i_j \in N(u_i)} |N(u_i)|^m |N(i_j)|^n \cdot \mathbf{i}_j, 
     \mathbf{i}_i^* = \mathbf{i}_i + \sum_{u_j \in N(i_i)} |N(i_i)|^m |N(u_j)|^n \cdot \mathbf{u}_j,
    \end{aligned}
\end{equation}
where $m$ and $n$ are two hyper-parameters that control the normalization of message passing. 
With $m=n=-\frac{1}{2}$, \cref{eq:tagcf} becomes the exact formulation of one-layer LightGCN (i.e., \cref{eq:lgcn_1}). 
Empirically, we observe that the setup with $m=n=-\frac{1}{2}$ for \method does not always work for all datasets.
This setup is directly migrated from message passing for homogeneous graphs~\citep{kipf2016semi}, which might not be applicable for bipartite graphs where all neighbors are heterogeneous~\citep{dasoulas2020learning}.
Unlike LightGCN which can fill this gap by adaptively tuning all representations during the training, \method cannot update any parameter since it is applied at test time, and hence requires tune-able normalization hyper-parameters.

Moreover, following our second perspective that message passing helps low-degree nodes more in CF, we further derive \methodp, which reduces the cost of \method by applying the one-time message passing only to low-degree nodes. 
Focusing on only low-degree nodes has two benefits: {\bf (i)} it reduces the number of nodes that \methodp needs to attend to, and {\bf (ii)} message passing for low-degree nodes is naturally cheaper than for high-degree nodes given the surrounding neighborhoods are sparser (mitigating neighbor explosion). 
The degree threshold that determines which nodes to apply \methodp is selected by the validation performance, with details in \cref{app:ep}. 

\method can effectively enhance MF methods by conducting message passing only once at test time. 
\method effectively utilizes graphs while circumventing most of notorious computational overheads of message passing. 
It is extremely flexible, simple to implement, and enjoys the performance benefits of graph-based CF method while paying the lowest overall scalability.
\vspace{-0.1in}
\section{Experiments}
\vspace{-0.1in}
We conduct extensive experiments to demonstrate the effectiveness and efficiency of \method. 
We aim to answer the following research questions: 
\textbf{RQ (1)}: how effective is \method at improving MF methods without using graphs, 
\textbf{RQ (2)}: how much overheads does \method introduce, 
\textbf{RQ (3)}: can \method effectively enhance MF methods trained by different objectives,
\textbf{RQ (4)}: how effective is \methodp w.r.t. different degree cutoffs,
and \textbf{RQ (5)}: do behaviors of \method align with our findings in \cref{sec:howmp}?

\vspace{-0.1in}
\subsection{Experimental Settings}
\vspace{-0.1in}
\textbf{Datasets}. 
We conduct comprehensive experiments on five commonly used academic benchmark datasets,
including \texttt{Amazon-book}, \texttt{Anime}, \texttt{Gowalla}, \texttt{Yelp2018}, and \texttt{MovieLens-1M}. Additionally,
we also evaluate on a large-scale real-world industrial user-item recommendation dataset \texttt{Internal}.
Descriptions of these datasets are provided in \cref{app:dataset}.

\noindent\textbf{Baselines}.
We compare \method with two branches of methods:
(1) CF methods that do not explicitly utilize graphs, including vanilla matrix factorization (MF) methods trained by BPR and DirectAU~\citep{rendle2009bpr,wang2022towards}, Efficient Neural Matrix Factorization~\citep{chen2020efficient} (denoted as ENMF), and UltraGCN~\cite{mao2021ultragcn}.
(2) Graph-based CF methods, 
including LightGCN~\citep{he2020lightgcn} and NGCF~\citep{wang2019neural}. 
Besides, we also compare with
recent graph-based CF methods that extend LightGCN by adding additional self-supervised signals for better performance, including LightGCL~\citep{cai2023lightgcl}, SimGCL~\citep{yu2022graph}, and SGL~\citep{wu2021self}. 
For the coherence of reading, we include comprehensive discussions about evaluation protocols across all methods, tuning for hyper-parameters, and other implementation details in \cref{app:ep}.

\vspace{-0.1in}
\subsection{Performance Improvement to Matrix Factorization Methods} 
\vspace{-0.1in}
For \textbf{RQ (1)}, \cref{tab:mftagcf} shows the performances of MF methods (MF and ENMF) as well as that of the performances of them with \method applied on their learned representations.
We observe that \method unanimously improves the recommendation performance for both of them. 
Specifically, across all datasets, \method on average improves the low-degree NDCG@20 by 4.6\% and 9.1\% and overall NDCG by 3.2\% and 7.1\% for ENMF and MF, respectively.
We also observe a similar performance improvement for Recall@20, where \method on average improves the low-degree Recall@20 by 4.5\% and 8.4\% and overall Recall@20 by 2.1\% and 7.2\% for ENMF and MF, respectively.
Furthermore, we notice that \method can improve the performance of UltraGCN, a method that utilizes the graph knowledge as additional supervision signals.
This phenomenon demonstrates the superior effectiveness of \method, indicating that our proposed test-time aggregation can further enhance graph-enhanced MF methods. 

By comparing the performance gains brought by \method on low-degree users with that on all users, we notice that gains for low-degree users are usually higher.
Hence, message passing in CF helps low-degree users more than for high-degree users, which echos with our observations in \cref{sec:reason_mp}.
To answer \textbf{RQ (5)}, the behavior of \method aligns with our second perspective in \cref{sec:reason_degree} that the supervision signal inadvertently conducts message passing.
Consequently, the room for improvement on high-degree users could be limited, as part of the contributions from message passing has already been claimed by the supervision signal.
\begin{table*}
\begin{center}
\caption{Recommendation performance (i.e., NDCG@20 and Recall@20) of all models across users with different numbers of interactions. 
The lower percentile indicates the set of nodes whose degrees are ranked in the lower 30\% population.
\textbf{Bold} and \underline{underline} indicate the best and second best model respectively.
LightGCN and MF are trained with DirectAU~\citep{wang2022towards}.
} 
\resizebox{1.0\columnwidth}{!}{
\begin{tabular}{l|cc|cc>{\columncolor[gray]{0.9}}c|cc>{\columncolor[gray]{0.9}}c|cc>{\columncolor[gray]{0.9}}c}
\toprule
Method & NGCF & LightGCN & ENMF & +TAG-CF & Impr. ($\uparrow$) & MF & +TAG-CF & Impr. ($\uparrow$\%) & UltraGCN & +TAG-CF & Impr. ($\uparrow$\%)\\
\midrule
\multicolumn{12}{c}{\textsc{NDCG@20 -- Low-degree Users (Lower Percentile)}} \\
\cmidrule(r){1-12} 
\texttt{Amazon-Book}& 5.32$_{\pm{0.08}}$ & \underline{8.09}$_{\pm{0.10}}$ & 5.33$_{\pm{0.02}}$ & 5.67$_{\pm{0.03}}$ & 6.4\%& 8.02$_{\pm{0.07}}$ & \textbf{8.26}$_{\pm{0.06}}$ & 3.0\% & 5.61$_{\pm{0.19}}$ & 6.04$_{\pm{0.21}}$ & 7.7\%\\
\texttt{Anime} & 20.13$_{\pm{0.18}}$ & 27.78$_{\pm{0.21}}$ & 22.23$_{\pm{0.19}}$ & 22.58$_{\pm{0.15}}$ & 1.6\% & 23.95$_{\pm{0.07}}$ & 27.15$_{\pm{0.04}}$ & 13.4\% & \underline{28.14}$_{\pm{0.19}}$ & \textbf{30.10}$_{\pm{0.21}}$ & 7.0\%\\
\texttt{Gowalla} & 8.46$_{\pm{0.06}}$ & \underline{10.08}$_{\pm{0.13}}$ & 3.87$_{\pm{0.15}}$ & 4.08$_{\pm{0.11}}$ & 5.4\% & 10.00$_{\pm{0.08}}$ & \textbf{10.19}$_{\pm{0.04}}$ & 1.9\% & 8.21$_{\pm{0.09}}$ & 8.63$_{\pm{0.11}}$ & 5.1\%\\
\texttt{Yelp-2018}  & 4.87$_{\pm{0.06}}$ & \underline{6.10}$_{\pm{0.09}}$ & 3.11$_{\pm{0.07}}$ & 3.26$_{\pm{0.04}}$ & 4.8\% & 6.08$_{\pm{0.08}}$ & \textbf{6.18}$_{\pm{0.05}}$ & 1.7\% & 4.89$_{\pm{0.10}}$ & 5.44$_{\pm{0.12}}$ & 11.2\% \\
\texttt{MovieLens-1M} & 22.13$_{\pm{0.26}}$ & 25.95$_{\pm{0.28}}$ & 18.34$_{\pm{0.19}}$ & 22.53$_{\pm{0.21}}$ & 22.8\% & 20.98$_{\pm{0.12}}$ & \textbf{29.20}$_{\pm{0.19}}$ & 39.2\% & 23.89$_{\pm{0.19}}$ & \underline{28.37}$_{\pm{0.21}}$ & 18.8\% \\
\texttt{Internal} & 5.91$_{\pm{0.07}}$ & \underline{8.12}$_{\pm{0.03}}$ & OOM & - & - & 6.79$_{\pm{0.04}}$ & \textbf{8.52}$_{\pm{0.06}}$ & 25.5\% & OOM & - & -\\
\cmidrule(r){1-12} 
\multicolumn{12}{c}{\textsc{NDCG@20 -- Overall}} \\
\cmidrule(r){1-12} 
\texttt{Amazon-Book} & 6.97$_{\pm{0.11}}$ & \underline{8.06}$_{\pm{0.11}}$ & 6.13$_{\pm{0.13}}$ & 6.54$_{\pm{0.09}}$ & 6.7\% & 8.01$_{\pm{0.03}}$ & \textbf{8.13}$_{\pm{0.03}}$ & 1.5\% & 5.77$_{\pm{0.25}}$  & 6.11$_{\pm{0.27}}$ & 5.9\%\\
\texttt{Anime} & 22.54$_{\pm{0.25}}$ & 27.97$_{\pm{0.21}}$ & 30.17$_{\pm{0.09}}$ & \underline{30.86}$_{\pm{0.12}}$ & 2.3\% & 24.01$_{\pm{0.06}}$ & 27.25$_{\pm{0.03}}$ & 9.8\% & 30.30$_{\pm{0.11}}$ & \textbf{30.89}$_{\pm{0.11}}$ & 1.9\% \\
\texttt{Gowalla} & 8.65$_{\pm{0.10}}$ & \textbf{9.96}$_{\pm{0.11}}$ & 5.23$_{\pm{0.04}}$ & 5.29$_{\pm{0.05}}$ & 1.1\% & 9.77$_{\pm{0.08}}$ & \underline{9.88}$_{\pm{0.04}}$ & 1.1\% & 8.53$_{\pm{0.14}}$ & 9.02$_{\pm{0.15}}$ & 5.7\% \\
\texttt{Yelp-2018} & 5.54$_{\pm{0.06}}$ & \underline{6.33}$_{\pm{0.06}}$ & 3.79$_{\pm{0.09}}$ & 3.89$_{\pm{0.05}}$ & 2.6\% &6.25$_{\pm{0.06}}$ & \textbf{6.36}$_{\pm{0.03}}$ & 1.8\% & 5.01$_{\pm{0.11}}$ & 5.53$_{\pm{0.11}}$ & 10.4\%\\
\texttt{MovieLens-1M} & 23.17$_{\pm{0.18}}$ & 26.64$_{\pm{0.23}}$ & 20.57$_{\pm{0.18}}$ & 22.98$_{\pm{0.20}}$ & 11.7\% & 22.51$_{\pm{0.14}}$ & \underline{29.65}$_{\pm{0.17}}$ & 31.7\% & 26.50$_{\pm{0.15}}$& \textbf{29.68}$_{\pm{0.21}}$ & 12.0\% \\
\texttt{Internal} & 6.94$_{\pm{0.06}}$ & \underline{8.10}$_{\pm{0.06}}$ & OOM & - & - & 7.04$_{\pm{0.02}}$ & \textbf{8.54}$_{\pm{0.02}}$ & 21.3\% & OOM & - & -\\
\midrule
\midrule
\multicolumn{12}{c}{\textsc{Recall@20 -- Low-degree Users (Lower Percentile)}} \\
\cmidrule(r){1-12} 
\texttt{Amazon-Book}& 10.71$_{\pm{0.14}}$ & \underline{13.18}$_{\pm{0.17}}$ & 10.42$_{\pm{0.16}}$& 11.08$_{\pm{0.11}}$& 6.3\% & 13.07$_{\pm{0.09}}$ & \textbf{13.37}$_{\pm{0.10}}$ & 2.3\% & 7.92$_{\pm{0.15}}$ & 8.31$_{\pm{0.10}}$ & 4.9\%\\
\texttt{Anime} & 25.74$_{\pm{0.35}}$ & 32.74$_{\pm{0.21}}$ & \underline{37.14}$_{\pm{0.59}}$ & \textbf{38.41}$_{\pm{0.53}}$ & 3.4\% & 29.08$_{\pm{0.09}}$ & 31.94$_{\pm{0.05}}$ & 9.8\% & 33.96$_{\pm{0.28}}$ & 36.49$_{\pm{0.28}}$ & 7.4\% \\
\texttt{Gowalla} & 17.53$_{\pm{0.32}}$ & \underline{19.14}$_{\pm{0.20}}$ & 8.73$_{\pm{0.08}}$ & 9.01$_{\pm{0.06}}$ & 3.2\% & 18.92$_{\pm{0.19}}$ & \textbf{19.17}$_{\pm{0.13}}$ & 1.3\% & 15.57$_{\pm{0.18}}$ & 16.01$_{\pm{0.15}}$ & 2.8\%\\
\texttt{Yelp-2018}  & 10.15$_{\pm{0.13}}$ & \underline{10.75}$_{\pm{0.14}}$ & 7.17$_{\pm{0.06}}$ & 7.54$_{\pm{0.12}}$ & 5.2\% & 10.63$_{\pm{0.13}}$ & \textbf{10.98}$_{\pm{0.14}}$ & 3.3\% & 7.71$_{\pm{0.15}}$ & 8.59$_{\pm{0.18}}$ & 11.4\%\\
\texttt{MovieLens-1M} & 22.71$_{\pm{0.16}}$ & 25.80$_{\pm{0.22}}$ & 19.58$_{\pm{0.14}}$ & 24.11$_{\pm{0.16}}$ & 23.1\% & 23.64$_{\pm{0.18}}$ & \underline{28.10}$_{\pm{0.20}}$ & 18.9\% & 26.13$_{\pm{0.21}}$ & \textbf{28.97}$_{\pm{0.23}}$ & 10.9\%\\
\texttt{Internal} & 10.54$_{\pm{0.09}}$ & \underline{13.81}$_{\pm{0.02}}$ & OOM & - & - & 11.13$_{\pm{0.05}}$ & \textbf{13.97}$_{\pm{0.06}}$ & 25.5\% & OOM & - & -\\
\cmidrule(r){1-12} 
\multicolumn{12}{c}{\textsc{Recall@20 -- Overall}} \\
\cmidrule(r){1-12} 
\texttt{Amazon-Book} & 10.30$_{\pm{0.21}}$ & \underline{12.76}$_{\pm{0.18}}$ & 10.89$_{\pm{0.18}}$ & 11.35$_{\pm{0.09}}$ & 4.2\% & 12.67$_{\pm{0.06}}$ & \textbf{12.97}$_{\pm{0.06}}$ & 2.4\% & 8.01$_{\pm{0.25}}$ & 8.53$_{\pm{0.27}}$ & 6.5\%\\
\texttt{Anime} & 28.12$_{\pm{0.22}}$ & 32.82$_{\pm{0.21}}$ & 34.10$_{\pm{0.25}}$ & 34.48$_{\pm{0.23}}$ & 1.1\% & 29.15$_{\pm{0.09}}$ & 31.95$_{\pm{0.05}}$ & 6.9\% & \underline{35.87}$_{\pm{0.39}}$ & \textbf{37.01}$_{\pm{0.39}}$ & 3.2\%\\
\texttt{Gowalla} & 17.93$_{\pm{0.06}}$ & \textbf{18.65}$_{\pm{0.14}}$ & 9.68$_{\pm{0.06}}$ & 9.74$_{\pm{0.09}}$ & 0.6\% & 18.30$_{\pm{0.17}}$ & \underline{18.53}$_{\pm{0.11}}$ & 1.3\% & 15.93$_{\pm{0.21}}$ & 16.36$_{\pm{0.22}}$ & 2.7\% \\
\texttt{Yelp-2018} & 10.02$_{\pm{0.06}}$ & \underline{10.98}$_{\pm{0.10}}$ & 6.89$_{\pm{0.09}}$ & 7.05$_{\pm{0.03}}$ & 2.3\% & 10.81$_{\pm{0.10}}$ & \textbf{11.21}$_{\pm{0.09}}$ & 3.7\% & 8.41$_{\pm{0.19}}$ & 9.89$_{\pm{0.20}}$ & 17.6\% \\
\texttt{MovieLens-1M} & 23.93$_{\pm{0.14}}$ & 26.30$_{\pm{0.20}}$ & 21.31$_{\pm{0.19}}$ & 23.88$_{\pm{0.25}}$ & 12.1\% & 26.30$_{\pm{0.14}}$ & \underline{28.40}$_{\pm{0.15}}$ & 8.0\% & 27.14$_{\pm{0.19}}$ & \textbf{29.78}$_{\pm{0.23}}$ & 9.7\% \\
\texttt{Internal} & 6.91$_{\pm{0.04}}$ & \underline{13.89}$_{\pm{0.06}}$ & OOM & - & - & 11.83$_{\pm{0.02}}$ & \textbf{14.41}$_{\pm{0.08}}$ & 21.8\% & OOM & - & -\\
\bottomrule
\end{tabular}
}
\label{tab:mftagcf}
\end{center}
\end{table*}

\begin{table*}
\begin{center}
\caption{Running time ($1 \times 10^3$ seconds) for MF methods and \method. Time \% is the percentage of running time \method takes w.r.t. the time for corresponding MF methods.
Speed$\uparrow$ refers to the ratio of running times between training-time aggregation (i.e., LightGCN) and \method.
All training steps are timed and terminated by an early stopping strategy (see \cref{app:ep}).
}
\resizebox{1.0\columnwidth}{!}{
    \begin{tabular}{l|c|cc>{\columncolor[gray]{0.9}}c|cc>{\columncolor[gray]{0.9}}c|ccc>{\columncolor[gray]{0.9}}c>{\columncolor[gray]{0.9}}c}
    \toprule
    Method  & Sparsity  & ENMF & +TAG-CF & Time \% & UltraGCN & +TAG-CF & Time \% & LightGCN & MF & +TAG-CF & Time \% & Speed$\uparrow$\\
    \midrule
    \texttt{Anime} &99.13\%  & 12.31 & +0.04 & 0.3\% & 93.31 & +0.04 & 0.1\% & 138.85 & 34.12 & +0.04 & 0.3\% & 4.06\texttimes{}\\
    \texttt{Yelp-2018} & 99.87\% & 2.15 & +0.02 & 0.9\% & 5.02 & +0.02 & 0.4\% & 5.81 & 3.17 & +0.02 & 0.6\% & 1.83\texttimes{}\\
    \texttt{Gowalla} & 99.91\% & 4.56 & +0.02 & 0.4\% & 12.55 &  +0.02 & 0.2\% & 13.27 & 7.74 & +0.02 & 0.3\% & 1.72\texttimes{}\\
    \texttt{Amazon-Book} & 99.94\% & 11.54 & +0.03 & 0.3\% & 39.25 & +0.03 & 0.1\% & 46.62 & 29.21 & +0.03 & 0.1\% & 1.59\texttimes{}\\
    \texttt{Internal} & 99.99\% & OOM & - & - & OOM & - & - & 47.32 & 32.62 & + 0.09 & 0.3\% & 1.44\texttimes{}\\
    \bottomrule
    \end{tabular}
    }
\label{tab:time}
\end{center}
\end{table*}
\begin{table}[t]
\begin{minipage}[t]{.51\textwidth}
\vspace{-0.43in}
\captionof{table}{The running time and performance of graph-based CF methods that extend LightGCN.} 
\resizebox{1.\columnwidth}{!}{
    \begin{tabular}{l|cccc}
    \toprule
    Method  & SGL & SimGCL & LightGCL & TAG-CF\\
    \midrule
    \multicolumn{5}{c}{\textsc{NDCG@20 -- Overall}} \\
    \midrule
    \texttt{Anime} & 27.02$_{\pm{0.05}}$ & 30.48$_{\pm{0.12}}$ & 28.34$_{\pm{0.16}}$ & 27.25$_{\pm{0.03}}$ \\
    \texttt{Yelp} & 5.67$_{\pm{0.04}}$ & 5.99$_{\pm{0.09}}$ & 4.93$_{\pm{0.06}}$ & 6.36$_{\pm{0.03}}$ \\
    \texttt{Gowalla} & 9.67$_{\pm{0.17}}$  & 10.32$_{\pm{0.06}}$ & 8.99$_{\pm{0.13}}$ & 9.88$_{\pm{0.04}}$ \\
    \texttt{Book} & 6.69$_{\pm{0.02}}$ & 7.02$_{\pm{0.05}}$ & 5.83$_{\pm{0.08}}$ & 8.13$_{\pm{0.03}}$ \\
    \rowcolor[gray]{0.9} Avg. Rank & 3.2 & \underline{1.7} & 3.5 & \textbf{1.2} \\
    \midrule
    \multicolumn{5}{c}{\textsc{Recall@20 -- Overall}} \\
    \midrule
    \texttt{Anime} & 31.29$_{\pm{0.09}}$ & 34.93$_{\pm{0.14}}$ & 33.64$_{\pm{0.22}}$ & 31.95$_{\pm{0.05}}$ \\
    \texttt{Yelp} & 10.01$_{\pm{0.08}}$ & 10.56$_{\pm{0.13}}$ & 8.83$_{\pm{0.04}}$ & 11.21$_{\pm{0.09}}$ \\
    \texttt{Gowalla} & 18.18$_{\pm{0.24}}$ & 19.22$_{\pm{0.09}}$ & 16.99$_{\pm{0.10}}$ & 18.53$_{\pm{0.09}}$ \\
    \texttt{Book} & 11.15$_{\pm{0.04}}$ & 11.51$_{\pm{0.09}}$ & 10.06$_{\pm{0.05}}$ & 12.97$_{\pm{0.06}}$ \\
    \rowcolor[gray]{0.9} Avg. Rank & 3.2 & \textbf{1.5} & 3.5 & \underline{1.7} \\
    \midrule
    \multicolumn{5}{c}{\textsc{Running Time ($1 \times 10^3$ second)}} \\
    \midrule
    \texttt{Anime} & 69.48 & 87.77 & 97.31 & 34.15 \\
    \texttt{Yelp}& 3.94 & 9.72 & 4.30 & 3.19 \\
    \texttt{Gowalla} & 9.32 & 29.11 & 11.10 & 7.76 \\
    \texttt{Book}& 63.21 & 71.39 & 38.87 & 29.24 \\
    \rowcolor[gray]{0.9} Avg. Rank & \underline{2.2} & 3.8 & 3.0 & \textbf{1.0} \\
    \midrule
    \rowcolor[gray]{0.9} Total Rank & 3.6 & \underline{2.8} & 3.9 & \textbf{1.9} \\
    \bottomrule
    \end{tabular}
    }
\label{tab:graph}
\end{minipage} \hfill
\hspace{0.2cm}
\begin{minipage}[t]{.48\textwidth}\
\vspace{-0.32in}
\captionof{table}{Performance of \method when applied to models trained with BPR loss.} 
\vspace{-0.1in}
\resizebox{1.\columnwidth}{!}{
    \begin{tabular}{l|c|cc>{\columncolor[gray]{0.9}}c}
    \toprule
    Method  & LightGCN & MF & +TAG-CF & Impr. ($\uparrow$\%)\\
    \midrule
    \multicolumn{5}{c}{\textsc{NDCG@20 -- Low-degree Users (Lower Percentile)}} \\
    \midrule
    \texttt{Anime} & 30.02$_{\pm{0.07}}$ & 29.36$_{\pm{0.23}}$ & 30.56$_{\pm{0.27}}$ & 4.1\% \\
    \texttt{Yelp}  & 4.34$_{\pm{0.07}}$ & 3.63$_{\pm{0.15}}$ & 3.81$_{\pm{0.18}}$ & 5.0\% \\
    \texttt{Gowalla} & 8.22$_{\pm{0.03}}$ & 7.56$_{\pm{0.14}}$ & 7.88$_{\pm{0.15}}$ & 4.2\% \\
    \texttt{Book} & 5.19$_{\pm{0.14}}$ & 4.19$_{\pm{0.14}}$ & 4.68$_{\pm{0.14}}$ & 11.7\%\\
    \midrule
    \multicolumn{5}{c}{\textsc{NDCG@20 -- Overall}} \\
    \midrule
    \texttt{Anime} & 30.14$_{\pm{0.07}}$ & 29.51$_{\pm{0.21}}$ & 30.23$_{\pm{0.26}}$ & 2.4\% \\
    \texttt{Yelp} & 4.87$_{\pm{0.06}}$ & 3.96$_{\pm{0.14}}$ & 4.26$_{\pm{0.17}}$ & 7.6\% \\
    \texttt{Gowalla} & 8.32$_{\pm{0.03}}$ & 7.51$_{\pm{0.12}}$ & 7.99$_{\pm{0.14}}$ & 6.4\% \\
    \texttt{Book} & 5.07$_{\pm{0.15}}$ & 4.15$_{\pm{0.13}}$ & 4.32$_{\pm{0.13}}$ & 4.1\% \\
    \midrule
    \multicolumn{5}{c}{\textsc{Recall@20 -- Low-degree Users (Lower Percentile)}} \\
    \midrule
    \texttt{Anime} & 34.23$_{\pm{0.08}}$ & 34.81$_{\pm{0.32}}$ & 35.42$_{\pm{0.35}}$ & 1.8\%\\
    \texttt{Yelp} & 8.19$_{\pm{0.20}}$ & 6.93$_{\pm{0.26}}$ & 7.25$_{\pm{0.19}}$ & 4.6\% \\
    \texttt{Gowalla} & 16.17$_{\pm{0.12}}$ & 14.86$_{\pm{0.23}}$ & 15.33$_{\pm{0.24}}$ & 3.2\% \\
    \texttt{Book} & 8.81$_{\pm{0.26}}$ & 7.45$_{\pm{0.22}}$ & 8.05$_{\pm{0.15}}$ & 8.1\% \\ 
    \midrule
    \multicolumn{5}{c}{\textsc{Recall@20 -- Overall}} \\
    \midrule
    \texttt{Anime} & 34.21$_{\pm{0.08}}$ & 34.84$_{\pm{0.30}}$ & 35.23$_{\pm{0.34}}$ & 1.1\%\\
    \texttt{Yelp} & 8.33$_{\pm{0.30}}$ & 7.27$_{\pm{0.27}}$ & 7.62$_{\pm{0.22}}$ & 4.8\%\\
    \texttt{Gowalla} & 15.69$_{\pm{0.07}}$ & 14.47$_{\pm{0.23}}$ & 14.92$_{\pm{0.25}}$ & 3.1\% \\
    \texttt{Book} & 8.65$_{\pm{0.24}}$ & 7.35$_{\pm{0.22}}$ & 7.64$_{\pm{0.20}}$ & 3.9\%\\ 
    \bottomrule
    \end{tabular}
    }
\label{tab:bprdau}
\end{minipage}%
\vspace{-0.15in}
\end{table}
\vspace{-0.1in}
\subsection{Performance Comparison Among Graph-based Methods}
\vspace{-0.1in}
Comparing \method with LightGCN in \cref{tab:mftagcf}, we can notice that \method mostly performs on par with and sometimes even outperforms LightGCN, without incorporating message passing during the training and only conducting test-time aggregation.
This phenomenon indicates that conducting neighbor aggregation at the testing time can recover most of the contributions of training-time message passing. 
To answer \textbf{RQ (5)}, \method aligns with our first perspective in \cref{sec:reason_mp} that the performance gain from beneficial neighbor information dominates their accompanying gradients.

We further compare \method with state-of-the-art graph-based CF methods, with their performance and efficiency shown in \cref{tab:graph}.
Among these performant baselines, \method exhibits competitive performance, with an average rank of 1.2 on NDCG and 1.7 on Recall.
Though not always the model that delivers the best performance, \method can deliver comparably promising results and introduces little computational overheads (i.e., ranked 1.0 for running time).
Considering efficiency as one factor, \method achieves the best performance across all baselines with an average rank of 1.9.

While performing on par with graph-based CF methods that aggregate neighbor contents at the training time, \method enjoys the performance benefits of message passing while paying the lowest overall scalability.
To answer \textbf{RQ (2)}, according to \cref{tab:time}, across all datasets, \method only introduces an average additional computational overhead of $0.05 \times 10^3$ seconds, which is less than 0.5\% of the total training time for matrix factorization methods. 
Comparing the running time of LightGCN with that of \method, we can observe that the latter can significantly improve the computational time, and the speedup is proportional to the sparsity of the dataset. 

\subsection{Effectiveness for Different Training Signals}
To answer \textbf{RQ (3)}, besides DirectAU, we also conduct experiments on BPR loss, as shown in \cref{tab:bprdau}.
When applied to BPR, \method still consistently improves the performance by large margins (i.e., 6.3\% and 5.1\% average improvement on low-degree and overall NDCG respectively, and 4.4\% and 3.2\% on low-degree and overall Recall respectively).
We notice that \method sometimes does not perform as competitively as LightGCN when both are trained with BPR.
We check norms of learned representations from MF with BPR and discover that they have high variance since BPR does not explicitly enforce any regularization. 
This might not favor \method as a test-time augmentation method due to its simple design, which cannot adapt representations with high variance. 

\subsection{Performance w.r.t. User Degree}

To answer \textbf{RQ (4)}, we apply \methodp to four public datasets and the performance and the efficiency improvement are demonstrated in \cref{fig:plus}.
Overall, the running time improvement brought by \methodp exponentially increases as the degree decreases, since low-degree users have sparse neighborhoods and there is hence less information for \methodp to aggregation. 
When the degree cutoff is low (i.e., less than 100), the effectiveness of \methodp proportional increases as the degree cutoff increases. 

When setting the cutoff to a user degree of around 100,
on \texttt{Amazon-Book}, \texttt{Gowalla}, and \texttt{Yelp-2018}, \methodp can further improve \method by 125\%, 17\%, and 11\%, respectively, with efficiency improvement of 7\%, 4\%, and 8\%.
In these cases, \methodp not only significantly improves the performance but also effectively reduces computational overheads. 
\begin{wrapfigure}{r}{0.7\textwidth} 
  \begin{center}
    \vspace{-0.2in}
    \includegraphics[width=0.7\textwidth]{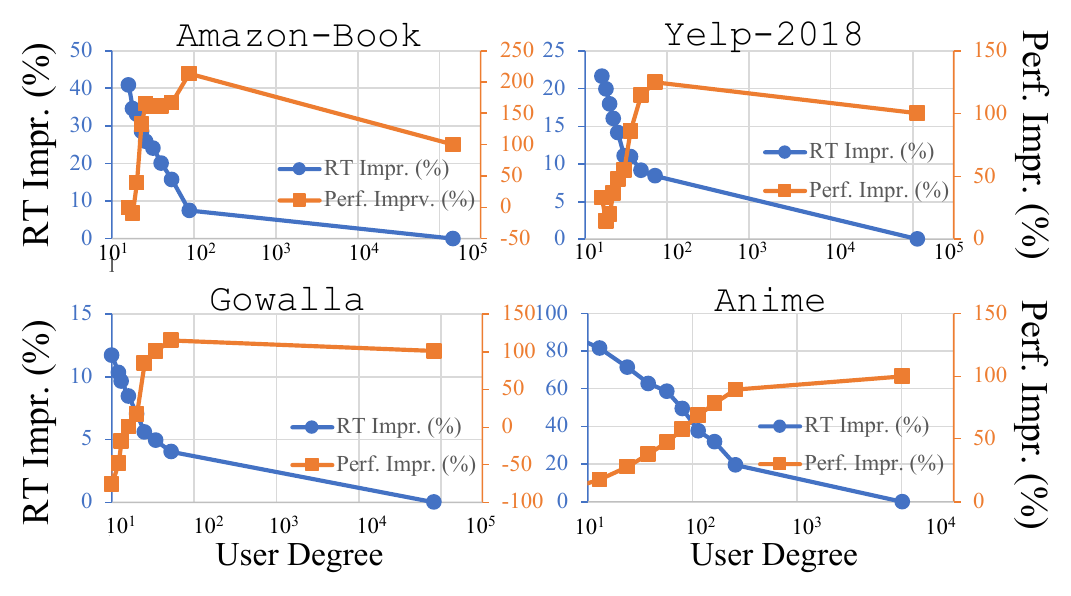}
    \caption{The performance and efficiency improvement of \methodp w.r.t. different cutoffs.
    \methodp further improves \method with less computational overheads. 100\% is the original performance/efficiency of vanilla \method.
    }
    \vspace{-0.2in}
    \label{fig:plus}
  \end{center}
\end{wrapfigure}
However, on these three datasets, after the cutoff bypasses a degree of 100, the performance improvement eventually decreases to the performance of \method (i.e., 100\%), indicating that test-time aggregation jeopardizes the performance on high-degree nodes. 
On \texttt{Anime}, though no downgrade on high-degree users, the performance improvement of \methodp to \method is incremental. 
These phenomenons not only demonstrate the effectiveness and efficiency of \methodp, but also verify our findings in \cref{sec:reason_degree} that message passing in CF helps low-degree users more than high-degree users.

\section{Conclusion}

In this study, we investigate how message passing improves collaborative filtering. 
Through a series of ablations, we demonstrate that the performance gain from neighbor contents dominates that from accompanying gradients brought by message passing in CF. 
Moreover, for the first time, we show that message passing in CF improves low-degree users more than high-degree users.
We theoretically demonstrate that CF supervision signals inadvertently conduct message passing in the backward step, even without treating the data as a graph. 
In light of these novel takeaways, we propose TAG-CF, a test-time aggregation framework effective at enhancing representations trained by different CF supervision signals.
Evaluated on five datasets, \method performs at par with SoTA methods with only a fraction of computational overhead (i.e., less than 1.0\% of the total training time).

\section{Limitation and Broader Impact}
One limitation of our proposal could be the utilization of graphs in large-scale machine learning pipeline. 
TAG-CF conducts a single-time aggregation of neighbors, which could be equivalently achieved by existing technologies such as SQL, BigQuery, etc. 
Furthermore, we observe no ethical concern entailed by our proposal, but we note that both ethical or unethical applications based on collaborative filtering may benefit from the effectiveness of our work. 
Care should be taken to ensure socially positive and beneficial results of machine learning algorithms.

\section{Acknowledgments}
This work was mostly conducted during the internship of Clark, William, and Zhichun at Snap Inc. 
We would like to thank Xin Chen and his colleagues from Snap Inc. for their help on pre-processing the internal dataset. 
This work was partially supported by the NSF under grants IIS-2321504, IIS-2334193, IIS-2203262, IIS-2217239, CNS-2426514, CNS-2203261, and CMMI-2146076. Any opinions, findings, and conclusions or recommendations expressed in this material are those of the authors and do not necessarily reflect the views of the sponsors. 

\noindent We sincerely appreciate constructive feedback from all reviewers during the paper review phase. 

\bibliographystyle{abbrv}
\bibliography{reference}
\appendix
\vspace{-0.1in}
\section{Proof of Theorem 1} \label{app:proof}
Here we re-state \cref{thm:bprdau} before diving into its proof:

\textbf{Theorem 1.}
 Given that a K-layer GCN minimizes $ \sum_{(i,j)\in \mathcal{D}} ||\mathbf{u}_i - \mathbf{i}_j||^2$, 
During the training of MF methods, assuming that $||\mathbf{u}_i||^2 = ||\mathbf{i}_j||^2 =1$ for any $u_i \in \mathcal{U}$ and  $I_j \in \mathcal{I}$, objectives of BPR and DirectAU are strictly upper-bounded by the objective of message passing (i.e., $\mathcal{L}_{\text{BPR}} \leq \sum_{(i,j)\in \mathcal{D}}$ $||\mathbf{u}_i - \mathbf{i}_j||^2$ and $\mathcal{L}_{\text{DirectAU}} \leq \sum_{(i,j)\in \mathcal{D}} ||\mathbf{u}_i - \mathbf{i}_j||^2$). 

One preliminary theoretical foundation for \cref{thm:bprdau} to hold is that a K-layer graph convolution network (GCN) exactly optimizes the second term in \cref{eq:laplacian}, which has been proved by \cite{zhu2021interpreting}.
For ease of reading, we re-phrase it again as the following:
\begin{theorem}
\label{thm:kgcn}
The message passing for GCN optimizes the following graph regularization term: $\mathcal{O}=\min_\mathbf{Z}\{tr(\mathbf{Z}^\intercal\mathbf{L}\mathbf{Z}))\}$.
\end{theorem}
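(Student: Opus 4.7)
The plan is to reproduce the standard ``unrolled optimization'' interpretation of linear graph convolution, exhibiting each GCN propagation step as one step of gradient descent on the Dirichlet-energy objective $\mathcal{O}(\mathbf{Z}) = \tfrac{1}{2}\,\text{tr}(\mathbf{Z}^\intercal \mathbf{L} \mathbf{Z})$. Concretely, I would first rewrite the trace form in its edge-wise quadratic form
\[
\text{tr}(\mathbf{Z}^\intercal \mathbf{L} \mathbf{Z}) \;=\; \tfrac{1}{2}\sum_{(i,j)\in\mathcal{E}} w_{ij}\,\bigl\lVert \tfrac{\mathbf{z}_i}{\sqrt{d_i}} - \tfrac{\mathbf{z}_j}{\sqrt{d_j}} \bigr\rVert^2,
\]
using the symmetric normalized Laplacian $\mathbf{L}=\mathbf{I}-\tilde{\mathbf{A}}$ with $\tilde{\mathbf{A}}=\mathbf{D}^{-1/2}(\mathbf{A}+\mathbf{I})\mathbf{D}^{-1/2}$ (matching the formulation used in \cref{eq:gcn} and \cref{eq:lgcn_1}). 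This rewriting clarifies that minimizing $\mathcal{O}$ encourages representations of neighbors to become similar, which is exactly the smoothing effect commonly attributed to GCN.

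Next I would compute $\nabla_{\mathbf{Z}} \mathcal{O}(\mathbf{Z}) = \mathbf{L}\mathbf{Z}$ and consider one step of vanilla gradient descent with step size $\eta$:
\[
\mathbf{Z}^{(k+1)} \;=\; \mathbf{Z}^{(k)} - \eta\,\mathbf{L}\mathbf{Z}^{(k)} \;=\; \bigl(\mathbf{I} - \eta\mathbf{L}\bigr)\mathbf{Z}^{(k)}.
\]
Choosing $\eta=1$ collapses this to $\mathbf{Z}^{(k+1)}=\tilde{\mathbf{A}}\mathbf{Z}^{(k)}$, which is precisely the parameter-free linear propagation rule of LightGCN (\cref{eq:lgcn_1}) and, up to the learnable transform $\mathbf{W}^{(k)}$ and nonlinearity $\sigma$, also the GCN rule (\cref{eq:gcn}). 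I would then argue by induction on $k$ that after $K$ propagation layers the representation $\mathbf{Z}^{(K)}=\tilde{\mathbf{A}}^{K}\mathbf{Z}^{(0)}$ is exactly the iterate produced by $K$ gradient-descent updates on $\mathcal{O}$ initialized at $\mathbf{Z}^{(0)}=\mathbf{X}$, and hence monotonically decreases $\mathcal{O}$ in the convex quadratic setting (convexity of $\mathcal{O}$ follows from positive semi-definiteness of $\mathbf{L}$).

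To close the gap between ``gradient descent decreases $\mathcal{O}$'' and ``GCN optimizes $\mathcal{O}$'', I would invoke standard convex-optimization facts: since $\mathbf{L}$ is PSD and has operator norm bounded by $2$, the iteration $\mathbf{Z}\mapsto(\mathbf{I}-\eta\mathbf{L})\mathbf{Z}$ is non-expansive for $\eta\in(0,1)$ and convergent (to the null space of $\mathbf{L}$) under mild conditions; this justifies describing stacked linear GCN layers as an unrolled solver for $\min_{\mathbf{Z}}\mathcal{O}(\mathbf{Z})$. For the nonlinear GCN with $\mathbf{W}^{(k)}$ and $\sigma$, I would note (as an approximation argument following prior work cited in the paper) that the linear skeleton governs the smoothing behavior and that the trainable components act as projections that preserve the descent direction in expectation.

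The main obstacle will be book-keeping around normalization and self-loops: one must be consistent about whether $\mathbf{L}$ denotes the symmetrically normalized Laplacian with added self-loops (matching $\tilde{\mathbf{A}}$) or the unnormalized/combinatorial Laplacian, since the identification of one gradient step with one propagation step only works for a specific pair of choices. I expect the cleanest route is to fix $\tilde{\mathbf{L}}=\mathbf{I}-\tilde{\mathbf{A}}$ throughout, absorb the factor of $2$ into the $\tfrac{1}{2}$ in front of $\mathcal{O}$, and state explicitly that the theorem concerns the \emph{linear} propagation (as in LightGCN), deferring the nonlinear case to the approximation argument above.
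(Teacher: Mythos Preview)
Your proposal is correct and arrives at the same identification $\mathbf{Z}^{(k+1)}=\tilde{\mathbf{A}}\mathbf{Z}^{(k)}$ as the paper, but the framing differs. The paper's proof is a terse fixed-point argument: it sets $\partial\,\text{tr}(\mathbf{Z}^\intercal\mathbf{L}\mathbf{Z})/\partial\mathbf{Z}=0$ to obtain the stationarity condition $\mathbf{L}\mathbf{Z}=0\Leftrightarrow\mathbf{Z}=\mathbf{A}\mathbf{Z}$, then observes that the GCN recursion $\mathbf{Z}^{(K)}=\mathbf{A}\mathbf{Z}^{(K-1)}=\cdots=\mathbf{A}^K\mathbf{X}\mathbf{W}$ is precisely fixed-point iteration on that equation. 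You instead cast each propagation as one gradient-descent step with $\eta=1$ on $\tfrac{1}{2}\text{tr}(\mathbf{Z}^\intercal\mathbf{L}\mathbf{Z})$, which is algebraically the same update but buys you an explicit monotone-descent guarantee (via PSD-ness of $\mathbf{L}$) and a cleaner convergence story, along with the edge-wise Dirichlet-energy rewriting and a careful treatment of normalization, self-loops, and the nonlinear case that the paper omits. Both routes are standard and equivalent here; yours is more thorough, the paper's is more economical.
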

\begin{proof}
    Set derivative of $tr(\mathbf{Z}^\intercal\mathbf{L}\mathbf{Z})$ with respect to $\mathbf{Z}$ to zero:
    \begin{equation}
    \diffp{tr(\mathbf{Z}^\intercal\mathbf{L}\mathbf{Z})}{{\mathbf{Z}}} = 0
    \rightarrow \mathbf{L}\mathbf{Z}=0 \rightarrow \mathbf{Z} = A\mathbf{Z}.
    \label{eq:pf1}
    \end{equation}
    With K$\rightarrow\infty$:
    \begin{equation}
        \mathbf{Z}^{(K)} = \mathbf{A}\mathbf{Z}^{(K-1)}
    \end{equation}
    which indicates:
    \begin{equation}
        \mathbf{Z}^{(K)} =\mathbf{A}\mathbf{Z}^{(K-1)} = \mathbf{A}^2\mathbf{Z}^{(K-2)} = \dots=\mathbf{A}^K\mathbf{Z}^{(0)} = \mathbf{A}^K\mathbf{X}\mathbf{W}.
    \end{equation}
\end{proof}
According to this theoretical foundation, it is straightforward that \cref{thm:kgcn} is also applicable for the message passing of LightGCN in the setting of CF if we let $\mathbf{A} = \{0,1\}^{(|\mathcal{U}|+|\mathcal{I}|) \times (|\mathcal{U}|+|\mathcal{I}|)}$, $\mathbf{X} = \mathbf{I}_{|\mathcal{U}|+|\mathcal{I}|}$, and $\mathbf{W} = (\mathbf{U}||\mathbf{I}$), where $||$ refers to the concatenation operation.
With this preliminary, the proof to \cref{thm:bprdau} starts as:
\begin{proof}
    DirectAU optimizes: 
    \begin{align}
        \mathcal{L}_{\text{DirectAU}} & = \sum_{(i,j)\in\mathcal{D}} ||\mathbf{u}_i - \mathbf{i}_j||^2 \\
        & + \sum_{u, u' \in \mathcal{U}}\log e^{-2||\mathbf{u}-\mathbf{u}'||^2} + 
        \sum_{i, i' \in \mathcal{I}}\log e^{-2||\mathbf{i}-\mathbf{i}'||^2}.
    \end{align}
    Since $\sum_{u, u' \in \mathcal{U}}\log e^{-2||\mathbf{u}-\mathbf{u}'||^2} <= 0$ and  $\sum_{i, i' \in \mathcal{I}}\log e^{-2||\mathbf{i}-\mathbf{i}'||^2}<=0$, we directly have $\mathcal{L}_{\text{DirectAU}} \leq \sum_{(i,j)\in \mathcal{D}} ||\mathbf{u}_i - \mathbf{i}_j||^2$.

    BPR optimizes: 
    \begin{align}
        \mathcal{L}_{\text{BPR}} &= - \sum_{(i,j)\in\mathcal{D}}\sum_{(i,k)\notin\mathcal{D}}\log\sigma(s_{ij}-s_{ik})= \\
        & -\sum_{(i,j)\in\mathcal{D}}\sum_{(i,k)\notin\mathcal{D}}\log\sigma(\mathbf{u}_i^\intercal\cdot\mathbf{i}_j - \mathbf{u}_i^\intercal\cdot\mathbf{i}_k) \\
        & = \sum_{(i,j)\in\mathcal{D}}\sum_{(i,k)\notin\mathcal{D}} -\log\Big(\frac{e^{\mathbf{u}_i^\intercal\cdot\mathbf{i}_j}}{e^{\mathbf{u}_i^\intercal\cdot\mathbf{i}_j} + e^{\mathbf{u}_i^\intercal\cdot\mathbf{i}_k}}\Big) \\
        & =
        \sum_{(i,j)\in\mathcal{D}}\sum_{(i,k)\notin\mathcal{D}} -\mathbf{u}_i^\intercal\cdot\mathbf{i}_j + \log\Big( e^{\mathbf{u}_i^\intercal\cdot\mathbf{i}_j} + e^{\mathbf{u}_i^\intercal\cdot\mathbf{i}_k}\Big)
        \label{eq:intermediate}
    \end{align}
    Since $||\mathbf{u}_i||^2 = ||\mathbf{i}_j||^2 =1$ for any $u_i \in \mathcal{U}$ and  $I_j \in \mathcal{I}$, $||\mathbf{u}_i - \mathbf{i}_j|| = \sqrt{1 - 2\mathbf{u}_i^\intercal \cdot \mathbf{i}_j + 1} \rightarrow -\mathbf{u}_i^\intercal\cdot\mathbf{i}_j = \frac{1}{2}||\mathbf{u}_i - \mathbf{i}_j||^2 - 1$. So \cref{eq:intermediate} can be written as:
    \begin{equation}
         \mathcal{L}_{\text{BPR}} = \frac{1}{2}||\mathbf{u}_i - \mathbf{i}_j||^2 - 1 + \log\Big( e^{\mathbf{u}_i^\intercal\cdot\mathbf{i}_j} + e^{\mathbf{u}_i^\intercal\cdot\mathbf{i}_k}\Big). 
    \end{equation}
    The maximum possible value of $e^{\mathbf{u}_i^\intercal\cdot\mathbf{i}_j} + e^{\mathbf{u}_i^\intercal\cdot\mathbf{i}_k}$ is $2e$, which is less than 10. Hence $\log\Big( e^{\mathbf{u}_i^\intercal\cdot\mathbf{i}_j} + e^{\mathbf{u}_i^\intercal\cdot\mathbf{i}_k}\Big) < 1$, which leads to the second part of \cref{thm:bprdau}: $\mathcal{L}_{\text{BPR}} \leq \sum_{(i,j)\in \mathcal{D}} ||\mathbf{u}_i - \mathbf{i}_j||^2$.
\end{proof}

\section{Dataset Description and Statistics}
\label{app:dataset}
We conduct experiments on five commonly used benchmark datasets,
that have been broadly utilized by the recommender system community, 
including \texttt{Amazon-book}~\citep{mcauley2016addressing}, \texttt{Anime}~\citep{anime}, \texttt{Gowalla}~\citep{cho2011friendship},  \texttt{Yelp2018} \citep{yelp}, and \texttt{MovieLens-1M}~\citep{harper2015movielens}.
Additionally, we also evaluate our method on a large-scale industrial user-content recommendation dataset - \texttt{Internal}, with statistics shown in \cref{tab:dataset}.
\begin{table}[h]
    \centering
    \caption{Statistics of datasets explored in this work. Due to privacy constrains, we only report approximated values for \texttt{Internal} dataset.}
    \begin{tabular}{l|cccc}
    \toprule 
    Dataset & \# Users & \# Items & \# Interactions & Sparsity \\
    \cmidrule(r){1-5} 
     \texttt{Amazon-book}            & 52,643  & 40,981  & 2,984,108 & 99.94\% \\
     \texttt{Anime}        & 73,515  & 12,295  & 7,813,727 & 99.13\%  \\
     \texttt{Gowalla}          & 29,858 & 40,981  & 1,027,370 & 99.91\%  \\
     \texttt{Yelp-2018}         & 31,668 & 38,048 & 1,561,406 & 99.87\%  \\
     \texttt{MovieLens-1M}         & 6,040 & 3,629 & 836,478 & 96.18\%  \\
     \texttt{Internal}    &  $\sim$0.5M & $\sim$0.2M & $\sim$7M & 99.99\% \\
    \bottomrule
    \end{tabular}
    \label{tab:dataset}
\end{table}
\section{Additional Experimental Settings}
\label{app:ep}
\subsection{Evaluation Protocol} We evaluate all models using metrics adopted in previous works, including NDCG@20 and Recall@20~\citep{he2020lightgcn}.
For the dataset split, we conduct the group-by-user splits and randomly select 80\%, 10\%, and 10\% of observed interactions as training, validation, and testing sets respectively. We adopt an early stopping strategy, where the training will be terminated if the validation NDCG@20 stops increasing for 3 continuous epochs. 
We use models with the best validation performance to report the performance. 
Besides, the evaluation metrics are computed by the all-ranking protocol, where all items are listed as candidates~\citep{rendle2009bpr}.
We explore this strategy since we want to evaluate the representation quality of all users. 
All experiments are conducted 10 times with different seeds, and we report both means and standard deviations across independent runs.

\subsection{Hyper-parameter Tuning}
We only conduct 25 searches per model for all methods to ensure the comparison fairness, so that our experiments are not biased to methods with sophisticated hyper-parameter search spaces.
Furthermore, we set the embedding dimensions for all models to 64 (i.e., $d=64$) to ensure a fair comparison, since a larger dimension usually leads to better performance in CF methods.
For \method, we only tune $m$ and $n$ in \cref{eq:tagcf} during test time from the list of [-2, -1.5, -1, -0.5, 0].
Besides, we train all models using Adam optimizer. TAG-CF's sensitivty to $m$ and $n$ is visually plotted in \cref{fig:mandn}.
We can observe that m and n are important for the success of TAG-CF. Fortunately, across datasets, the optimal selection of m and n is pretty similar (e.g., m=n=-0.5 or m=n=0). The other solution to automatically tune m and n could be initialzing m and n to -0.5 (i.e., the value that generally works well across datasets) and conducting gradient descent on them using the training loss. But in this work we observe that manually tuning them on a small set of candidates can already deliver promising results.

\subsection{Implementation Detail}
We conduct most of the baseline experiments with RecBole~\citep{zhao2022recbole}.
Besides, we use Google Cloud Platform with 12 CPU cores, 64GB RAM, and a single V100 GPU with 16GB VRAM to run all experiments. 
\begin{table}
  \begin{center}
    \captionof{table}{Improvement of \methodp to \method. Degree cutoffs are selected according to \cref{fig:cutoff}.} 
    \label{tab:plus}
    \begin{tabular}{l|cccc}
    \toprule
    Metric & \texttt{Yelp-2018} & \texttt{Gowalla} & \texttt{Amazon-book} & \texttt{Anime} \\ 
    \midrule
    \multicolumn{5}{c}{\textsc{BPR}} \\
    \cmidrule(r){1-5} 
    NDCG@20 & 27.1\% & 10.3\% & 122.4\% & 0\% \\
    Recall@20 & 31.4\% & 14.2\% & 119.2\% & 0\% \\
    Running Time & 8\% & 4\% & 9\% & 0\% \\
    \cmidrule(r){1-5} 
    \multicolumn{5}{c}{\textsc{DirectAU}} \\
    \cmidrule(r){1-5} 
    NDCG@20 & 34.1\% & 22.5\% & 98.3\% & 0\%\\
    Recall@20 & 29.2\% & 30.1\% & 104.1\% & 0\% \\
    Running Time & 8\% & 4\% & 9\% & 0\% \\
    \bottomrule
    \end{tabular}
  \end{center}
\end{table}
\section{Degree Cutoff Selection for TAG-CF+}
We first sort all users according to their degree and split the sorted list into 10 user buckets\footnote{The number of buckets can be set to arbitrary numbers for finer adjustments. In this study, we pick 10 as a proof of concept.}, where each bucket contains non-overlapped users with similar degrees.
Starting from the bucket with the lowest user degree, \methodp keeps applying test-time-aggregation demonstrated in \cref{eq:tagcf} to all buckets until the validation performance starts to decrease or the performance improvement is less than $2\%$ compared with \method.
The degree cutoffs circled in \cref{fig:cutoff} are the ones selected by this strategy and most of them correspond to the most performant configuration, shown in \cref{tab:plus}.
\label{app:cutoff}
\begin{figure}
    \includegraphics[width=1\textwidth]{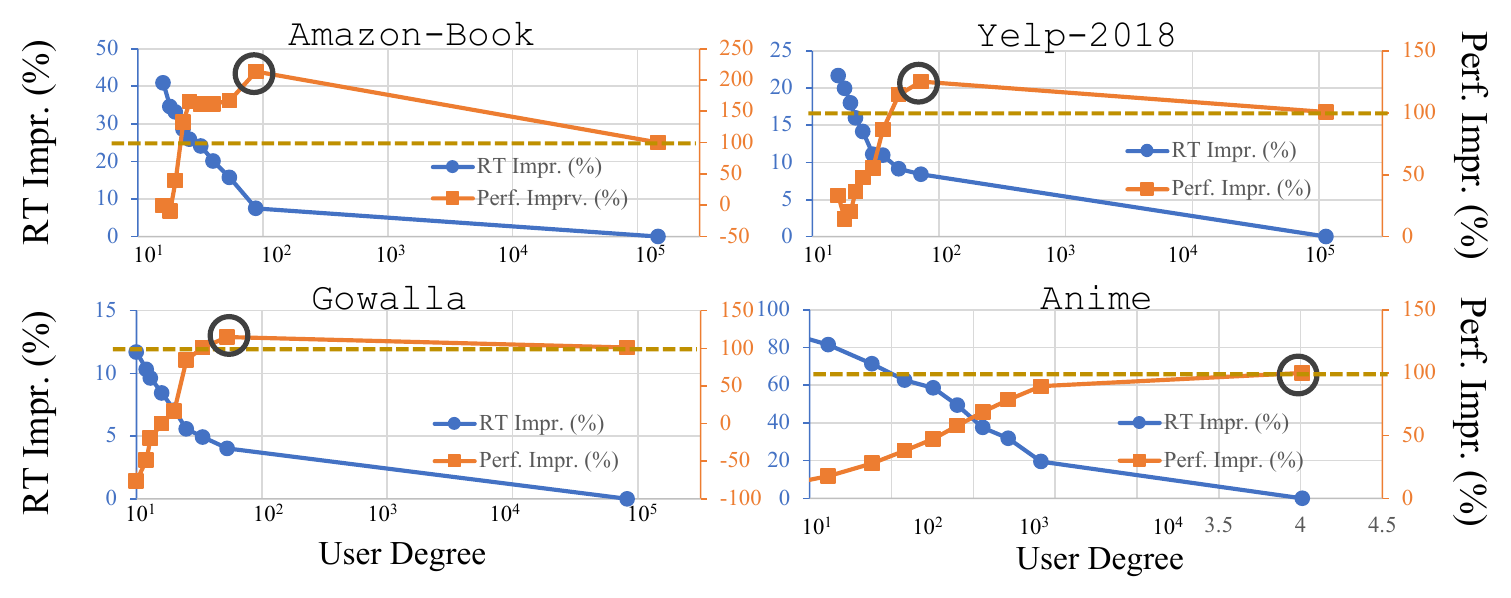}
    \caption{
Improvement of \methodp w.r.t. different cutoffs.
Yellow dashed lines indicate \method, and black circles refer to the optimal degree cutoff that \methodp selects.
}
    \label{fig:cutoff}
    \vspace{-0.2in}
\end{figure}

\begin{wraptable}{r}{.6\textwidth}
  \begin{center}
  \vspace{-0.2in}
    \captionof{table}{Improvement (NDCG@20) brought by TAG-CF at different degree cutoffs and upsampling rates on ML-1M.} 
    \vspace{-0.1in}
    \label{tab:upsample}
    \resizebox{0.6\columnwidth}{!}{
    \begin{tabular}{c|cc>{\columncolor[gray]{0.9}}c|cc>{\columncolor[gray]{0.9}}c}
    \toprule
    Up-sampling & \multicolumn{3}{c}{\multirow{2}{*}{Up-sample Rate: 100\%}} & \multicolumn{3}{c}{\multirow{2}{*}{Up-sample Rate: 300\%}} \\
    Degree  \\
    \midrule
    & MF & + \method & Impr. ($\uparrow$\%) & MF & + \method & Impr. ($\uparrow$\%) \\
    \midrule
    40 & 20.62 &28.87 & 38.8\% & 19.30 &25.01 & 30.3\% \\
    80 & 20.10 & 27.43 & 35.9\% & 18.40 &23.30 & 26.8\%\\
    160& 19.39 & 26.63 & 36.6\% & 17.93 &23.37 & 29.8\% \\
    \bottomrule
    \end{tabular}
    }
  \end{center}
  \vspace{-0.2in}
\end{wraptable}

\section{Analysis of TAG-CF through Up-sampling} \label{sec:upsample}
In \cref{sec:howmp}, we connect CF objective functions to message passing and show that they inadvertently conduct message passing during the back-propagation. 
Since this inadvertent message passing happens during the back-propagation, its performance is positively correlated to the amount of training signals a user/item can get. 
In the case of CF, the amount of training signals for a user is directly proportional to the node degree of this user. 
High-degree active users naturally benefit more from the inadvertent message passing from objective functions like BPR and DirectAU, because they acquire more training signals from the objective function. 
Hence, when explicit message passing is applied to CF methods, the performance gain for high-degree users is less significant than that for low-degree users. 
Because the contribution of the message passing over high-degree nodes has been mostly fulfilled by the inadvertent message passing during the training. 

To quantitatively prove this line of theory, we incrementally up-sample low-degree training examples and observe the performance improvement that TAG-CF could introduce at each upsampling rate. 
If our line of theory is correct, then we should expect less performance improvement on low-degree users for a larger upsampling rate. 
The results are shown in \cref{tab:upsample}.
From this table, though upsampling low-degree users hurts the overall performance, we can observe that the performance improvement brought by TAG-CF for low-degree users decreases, as the upsampling rate increases. 

According to this experiment, we can conclude that the more supervision signals a user receives (no matter for a low-degree or high-degree user), the less performance improvement message passing can bring. 
This experiment quantitatively shows why the performance improvement of high-degree users could be limited more than low-degree users. 
Because high-degree users naturally receive more training signals during the training whereas low-degree users receive fewer training signals. 

\section{Experiments on Ranking Metrics@10 } \label{sec:10metrics}
This section shows the performance of all models as well as \method's improvement to them when evaluated with ranking metrics with 10 candidates. The results are shown in \cref{tab:rebuttal_table1}, with similar trends as we have observed in \cref{tab:mftagcf}.

\begin{table*}
\begin{center}
\caption{Recommendation performance (i.e., NDCG@10 and Recall@10) of all models across users with different numbers of interactions. 
Setting explored in this table is the same as what \cref{tab:mftagcf} has. 
} 
\resizebox{1.0\columnwidth}{!}{
\begin{tabular}{l|cc|cc>{\columncolor[gray]{0.9}}c|cc>{\columncolor[gray]{0.9}}c|cc>{\columncolor[gray]{0.9}}c}
\toprule
Method & NGCF & LightGCN & ENMF & +TAG-CF & Impr. ($\uparrow$) & MF & +TAG-CF & Impr. ($\uparrow$\%) & UltraGCN & +TAG-CF & Impr. ($\uparrow$\%)\\
\midrule
\multicolumn{12}{c}{\textsc{NDCG@10 -- Low-degree Users (Lower Percentile)}} \\
\cmidrule(r){1-12} 
\texttt{Amazon-Book} & 3.33$_{\pm{0.09}}$ & 5.05$_{\pm{0.11}}$ & 3.32$_{\pm{0.02}}$ & 3.54$_{\pm{0.03}}$ & 6.8\%& 5.01$_{\pm{0.08}}$ & 5.19$_{\pm{0.06}}$ & 3.7\%& 3.49$_{\pm{0.17}}$ & 3.80$_{\pm{0.24}}$ & 8.7\%\\
\texttt{Anime}  & 7.10$_{\pm{0.18}}$ & 9.77$_{\pm{0.17}}$ & 7.88$_{\pm{0.21}}$ & 8.03$_{\pm{0.12}}$ & 1.9\%& 8.48$_{\pm{0.07}}$ & 9.68$_{\pm{0.03}}$ & 14.2\%& 9.98$_{\pm{0.16}}$ & 10.69$_{\pm{0.24}}$ & 7.0\%\\
\texttt{Gowalla}  & 5.54$_{\pm{0.07}}$ & 6.59$_{\pm{0.12}}$ & 2.54$_{\pm{0.17}}$ & 2.68$_{\pm{0.09}}$ & 5.4\%& 6.54$_{\pm{0.07}}$ & 6.73$_{\pm{0.04}}$ & 2.9\%& 5.38$_{\pm{0.10}}$ & 5.70$_{\pm{0.11}}$ & 6.0\%\\
\texttt{Yelp-2018}   & 2.80$_{\pm{0.07}}$ & 3.51$_{\pm{0.10}}$ & 1.80$_{\pm{0.08}}$ & 1.89$_{\pm{0.04}}$ & 5.2\%& 3.52$_{\pm{0.09}}$ & 3.57$_{\pm{0.05}}$ & 1.6\%& 2.82$_{\pm{0.10}}$ & 3.16$_{\pm{0.14}}$ & 12.1\%\\
\texttt{MovieLens-1M}  & 15.14$_{\pm{0.30}}$ & 17.81$_{\pm{0.32}}$ & 12.55$_{\pm{0.19}}$ & 15.55$_{\pm{0.22}}$ & 23.9\%& 14.44$_{\pm{0.11}}$ & 20.11$_{\pm{0.17}}$ & 39.2\%& 16.39$_{\pm{0.17}}$ & 19.54$_{\pm{0.23}}$ & 19.2\%\\
\cmidrule(r){1-12} 
\multicolumn{12}{c}{\textsc{NDCG@10 -- Overall}} \\
\cmidrule(r){1-12} 
\texttt{Amazon-Book}  & 4.35$_{\pm{0.13}}$ & 5.03$_{\pm{0.10}}$ & 3.81$_{\pm{0.11}}$ & 4.09$_{\pm{0.08}}$ & 7.4\%& 4.98$_{\pm{0.03}}$ & 5.08$_{\pm{0.03}}$ & 1.9\%& 3.60$_{\pm{0.25}}$ & 3.83$_{\pm{0.23}}$ & 6.6\%\\
\texttt{Anime}  & 7.99$_{\pm{0.27}}$ & 9.92$_{\pm{0.24}}$ & 10.63$_{\pm{0.10}}$ & 10.97$_{\pm{0.11}}$ & 3.1\%& 8.52$_{\pm{0.05}}$ & 9.73$_{\pm{0.03}}$ & 14.2\%& 10.66$_{\pm{0.10}}$ & 10.99$_{\pm{0.09}}$ & 3.1\%\\
\texttt{Gowalla}  & 5.65$_{\pm{0.11}}$ & 6.54$_{\pm{0.11}}$ & 3.42$_{\pm{0.05}}$ & 3.49$_{\pm{0.05}}$ & 2.1\%& 6.41$_{\pm{0.07}}$ & 6.53$_{\pm{0.04}}$ & 1.9\%& 5.59$_{\pm{0.14}}$ & 5.93$_{\pm{0.14}}$ & 6.1\%\\
\texttt{Yelp-2018}  & 3.19$_{\pm{0.05}}$ & 3.64$_{\pm{0.05}}$ & 2.18$_{\pm{0.11}}$ & 2.26$_{\pm{0.05}}$ & 3.6\%& 3.59$_{\pm{0.07}}$ & 3.67$_{\pm{0.03}}$ & 2.3\%& 2.88$_{\pm{0.11}}$ & 3.21$_{\pm{0.10}}$ & 11.5\%\\
\texttt{MovieLens-1M}  & 15.94$_{\pm{0.20}}$ & 18.24$_{\pm{0.27}}$ & 14.12$_{\pm{0.17}}$ & 15.80$_{\pm{0.21}}$ & 11.9\%& 15.47$_{\pm{0.11}}$ & 20.51$_{\pm{0.18}}$ & 32.6\%& 18.23$_{\pm{0.13}}$ & 20.43$_{\pm{0.25}}$ & 12.1\%\\
\midrule
\midrule
\multicolumn{12}{c}{\textsc{Recall@10 -- Low-degree Users (Lower Percentile)}} \\
\cmidrule(r){1-12} 
\texttt{Amazon-Book} & 3.75$_{\pm{0.12}}$ & 4.64$_{\pm{0.19}}$ & 3.65$_{\pm{0.14}}$ & 3.93$_{\pm{0.13}}$ & 7.6\%& 4.57$_{\pm{0.10}}$ & 4.74$_{\pm{0.10}}$ & 3.8\%& 2.77$_{\pm{0.15}}$ & 2.93$_{\pm{0.12}}$ & 5.8\%\\
\texttt{Anime}  & 10.10$_{\pm{0.28}}$ & 12.84$_{\pm{0.24}}$ & 14.64$_{\pm{0.64}}$ & 15.15$_{\pm{0.57}}$ & 3.5\%& 11.41$_{\pm{0.08}}$ & 12.61$_{\pm{0.06}}$ & 10.5\%& 13.34$_{\pm{0.25}}$ & 14.52$_{\pm{0.30}}$ & 8.8\%\\
\texttt{Gowalla}  & 7.22$_{\pm{0.28}}$ & 7.88$_{\pm{0.17}}$ & 3.59$_{\pm{0.06}}$ & 3.74$_{\pm{0.07}}$ & 4.2\%& 7.82$_{\pm{0.21}}$ & 7.96$_{\pm{0.13}}$ & 1.8\%& 6.40$_{\pm{0.15}}$ & 6.66$_{\pm{0.17}}$ & 4.2\%\\
\texttt{Yelp-2018}   & 3.45$_{\pm{0.12}}$ & 3.64$_{\pm{0.13}}$ & 2.44$_{\pm{0.06}}$ & 2.57$_{\pm{0.10}}$ & 5.5\%& 3.60$_{\pm{0.12}}$ & 3.73$_{\pm{0.16}}$ & 3.6\%& 2.62$_{\pm{0.16}}$ & 2.92$_{\pm{0.19}}$ & 11.5\%\\
\texttt{MovieLens-1M}  & 6.60$_{\pm{0.15}}$ & 7.54$_{\pm{0.18}}$ & 5.73$_{\pm{0.11}}$ & 7.07$_{\pm{0.14}}$ & 23.4\%& 6.90$_{\pm{0.18}}$ & 8.22$_{\pm{0.22}}$ & 19.2\%& 7.66$_{\pm{0.22}}$ & 8.54$_{\pm{0.23}}$ & 11.5\%\\
\cmidrule(r){1-12} 
\multicolumn{12}{c}{\textsc{Recall@10 -- Overall}} \\
\cmidrule(r){1-12} 
\texttt{Amazon-Book}  & 3.62$_{\pm{0.22}}$ & 4.46$_{\pm{0.17}}$ & 3.81$_{\pm{0.17}}$ & 3.99$_{\pm{0.08}}$ & 4.8\%& 4.44$_{\pm{0.06}}$ & 4.56$_{\pm{0.07}}$ & 2.6\%& 2.81$_{\pm{0.24}}$ & 3.00$_{\pm{0.26}}$ & 6.7\%\\
\texttt{Anime}  & 11.12$_{\pm{0.18}}$ & 12.86$_{\pm{0.22}}$ & 13.45$_{\pm{0.26}}$ & 13.71$_{\pm{0.26}}$ & 1.9\%& 11.43$_{\pm{0.08}}$ & 12.71$_{\pm{0.04}}$ & 11.2\%& 14.19$_{\pm{0.46}}$ & 14.64$_{\pm{0.39}}$ & 3.2\%\\
\texttt{Gowalla}  & 7.43$_{\pm{0.06}}$ & 7.70$_{\pm{0.12}}$ & 4.01$_{\pm{0.07}}$ & 4.04$_{\pm{0.08}}$ & 0.8\%& 7.57$_{\pm{0.18}}$ & 7.72$_{\pm{0.09}}$ & 2.0\%& 6.56$_{\pm{0.17}}$ & 6.77$_{\pm{0.21}}$ & 3.4\%\\
\texttt{Yelp-2018}  & 3.39$_{\pm{0.06}}$ & 3.73$_{\pm{0.09}}$ & 2.32$_{\pm{0.09}}$ & 2.41$_{\pm{0.03}}$ & 3.7\%& 3.67$_{\pm{0.12}}$ & 3.80$_{\pm{0.11}}$ & 3.6\%& 2.83$_{\pm{0.17}}$ & 3.36$_{\pm{0.24}}$ & 18.7\%\\
\texttt{MovieLens-1M}  & 7.00$_{\pm{0.13}}$ & 7.66$_{\pm{0.20}}$ & 6.20$_{\pm{0.23}}$ & 7.02$_{\pm{0.25}}$ & 13.2\%& 7.68$_{\pm{0.12}}$ & 8.33$_{\pm{0.16}}$ & 8.4\%& 7.93$_{\pm{0.20}}$ & 8.75$_{\pm{0.22}}$ & 10.3\%\\
\bottomrule
\end{tabular}
}
\label{tab:rebuttal_table1}
\end{center}
\end{table*}

\begin{figure}
    \includegraphics[width=1\textwidth]{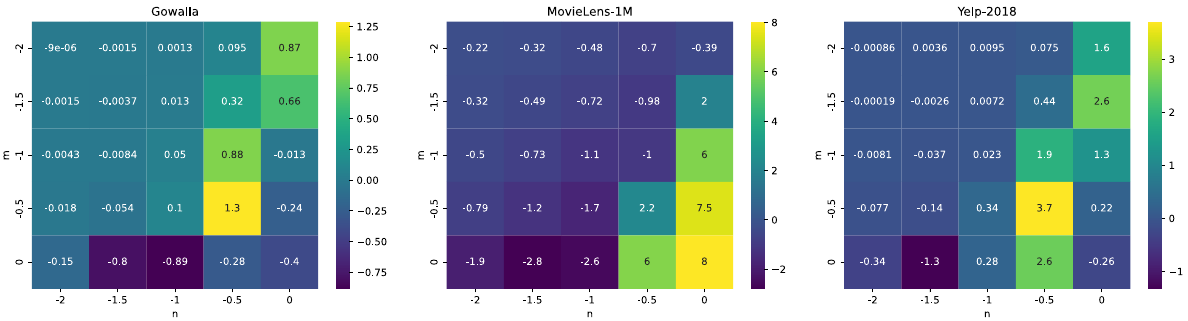}
    \caption{The sensitivity of TAG-CF to $m$ and $n$ in \cref{eq:tagcf}. Numbers reported in these plots are performance improvement (\%) brought by TAG-CF to MF trained by DirectAU~\citep{wang2022towards} on Recall@20. }
    \label{fig:mandn}
    \vspace{-0.2in}
\end{figure}


\newpage
\section*{NeurIPS Paper Checklist}

\begin{enumerate}

\item {\bf Claims}
    \item[] Question: Do the main claims made in the abstract and introduction accurately reflect the paper's contributions and scope?
    \item[] Answer: \answerYes{} 
    \item[] Justification: This paper discusses how and why does message passing help collaborative filtering. We approach this question by analyzing from two perspectives (e.g., Section 3) and propose TAG-CF given our findings. 
    \item[] Guidelines:
    \begin{itemize}
        \item The answer NA means that the abstract and introduction do not include the claims made in the paper.
        \item The abstract and/or introduction should clearly state the claims made, including the contributions made in the paper and important assumptions and limitations. A No or NA answer to this question will not be perceived well by the reviewers. 
        \item The claims made should match theoretical and experimental results, and reflect how much the results can be expected to generalize to other settings. 
        \item It is fine to include aspirational goals as motivation as long as it is clear that these goals are not attained by the paper. 
    \end{itemize}

\item {\bf Limitations}
    \item[] Question: Does the paper discuss the limitations of the work performed by the authors?
    \item[] Answer: \answerYes{} 
    \item[] Justification: We discussed limitations at the very end of the paper.
    \item[] Guidelines:
    \begin{itemize}
        \item The answer NA means that the paper has no limitation while the answer No means that the paper has limitations, but those are not discussed in the paper. 
        \item The authors are encouraged to create a separate "Limitations" section in their paper.
        \item The paper should point out any strong assumptions and how robust the results are to violations of these assumptions (e.g., independence assumptions, noiseless settings, model well-specification, asymptotic approximations only holding locally). The authors should reflect on how these assumptions might be violated in practice and what the implications would be.
        \item The authors should reflect on the scope of the claims made, e.g., if the approach was only tested on a few datasets or with a few runs. In general, empirical results often depend on implicit assumptions, which should be articulated.
        \item The authors should reflect on the factors that influence the performance of the approach. For example, a facial recognition algorithm may perform poorly when image resolution is low or images are taken in low lighting. Or a speech-to-text system might not be used reliably to provide closed captions for online lectures because it fails to handle technical jargon.
        \item The authors should discuss the computational efficiency of the proposed algorithms and how they scale with dataset size.
        \item If applicable, the authors should discuss possible limitations of their approach to address problems of privacy and fairness.
        \item While the authors might fear that complete honesty about limitations might be used by reviewers as grounds for rejection, a worse outcome might be that reviewers discover limitations that aren't acknowledged in the paper. The authors should use their best judgment and recognize that individual actions in favor of transparency play an important role in developing norms that preserve the integrity of the community. Reviewers will be specifically instructed to not penalize honesty concerning limitations.
    \end{itemize}

\item {\bf Theory Assumptions and Proofs}
    \item[] Question: For each theoretical result, does the paper provide the full set of assumptions and a complete (and correct) proof?
    \item[] Answer: \answerYes{} 
    \item[] Justification: We propose a theorem in our main paper and provide proofs in the appendix (i.e., Appendix 1). 
    \item[] Guidelines:
    \begin{itemize}
        \item The answer NA means that the paper does not include theoretical results. 
        \item All the theorems, formulas, and proofs in the paper should be numbered and cross-referenced.
        \item All assumptions should be clearly stated or referenced in the statement of any theorems.
        \item The proofs can either appear in the main paper or the supplemental material, but if they appear in the supplemental material, the authors are encouraged to provide a short proof sketch to provide intuition. 
        \item Inversely, any informal proof provided in the core of the paper should be complemented by formal proofs provided in appendix or supplemental material.
        \item Theorems and Lemmas that the proof relies upon should be properly referenced. 
    \end{itemize}

    \item {\bf Experimental Result Reproducibility}
    \item[] Question: Does the paper fully disclose all the information needed to reproduce the main experimental results of the paper to the extent that it affects the main claims and/or conclusions of the paper (regardless of whether the code and data are provided or not)?
    \item[] Answer: \answerYes{} 
    \item[] Justification: We provide hyper-parameter setups to reproduce our experiments (i.e., Appendix C.2). Besides, we also provide source code to reproduce our experiments. 
    \item[] Guidelines:
    \begin{itemize}
        \item The answer NA means that the paper does not include experiments.
        \item If the paper includes experiments, a No answer to this question will not be perceived well by the reviewers: Making the paper reproducible is important, regardless of whether the code and data are provided or not.
        \item If the contribution is a dataset and/or model, the authors should describe the steps taken to make their results reproducible or verifiable. 
        \item Depending on the contribution, reproducibility can be accomplished in various ways. For example, if the contribution is a novel architecture, describing the architecture fully might suffice, or if the contribution is a specific model and empirical evaluation, it may be necessary to either make it possible for others to replicate the model with the same dataset, or provide access to the model. In general. releasing code and data is often one good way to accomplish this, but reproducibility can also be provided via detailed instructions for how to replicate the results, access to a hosted model (e.g., in the case of a large language model), releasing of a model checkpoint, or other means that are appropriate to the research performed.
        \item While NeurIPS does not require releasing code, the conference does require all submissions to provide some reasonable avenue for reproducibility, which may depend on the nature of the contribution. For example
        \begin{enumerate}
            \item If the contribution is primarily a new algorithm, the paper should make it clear how to reproduce that algorithm.
            \item If the contribution is primarily a new model architecture, the paper should describe the architecture clearly and fully.
            \item If the contribution is a new model (e.g., a large language model), then there should either be a way to access this model for reproducing the results or a way to reproduce the model (e.g., with an open-source dataset or instructions for how to construct the dataset).
            \item We recognize that reproducibility may be tricky in some cases, in which case authors are welcome to describe the particular way they provide for reproducibility. In the case of closed-source models, it may be that access to the model is limited in some way (e.g., to registered users), but it should be possible for other researchers to have some path to reproducing or verifying the results.
        \end{enumerate}
    \end{itemize}

\item {\bf Open access to data and code}
    \item[] Question: Does the paper provide open access to the data and code, with sufficient instructions to faithfully reproduce the main experimental results, as described in supplemental material?
    \item[] Answer: \answerYes{} 
    \item[] Justification: We append our code for others to reproduce our results. 
    \item[] Guidelines:
    \begin{itemize}
        \item The answer NA means that paper does not include experiments requiring code.
        \item Please see the NeurIPS code and data submission guidelines (\url{https://nips.cc/public/guides/CodeSubmissionPolicy}) for more details.
        \item While we encourage the release of code and data, we understand that this might not be possible, so “No” is an acceptable answer. Papers cannot be rejected simply for not including code, unless this is central to the contribution (e.g., for a new open-source benchmark).
        \item The instructions should contain the exact command and environment needed to run to reproduce the results. See the NeurIPS code and data submission guidelines (\url{https://nips.cc/public/guides/CodeSubmissionPolicy}) for more details.
        \item The authors should provide instructions on data access and preparation, including how to access the raw data, preprocessed data, intermediate data, and generated data, etc.
        \item The authors should provide scripts to reproduce all experimental results for the new proposed method and baselines. If only a subset of experiments are reproducible, they should state which ones are omitted from the script and why.
        \item At submission time, to preserve anonymity, the authors should release anonymized versions (if applicable).
        \item Providing as much information as possible in supplemental material (appended to the paper) is recommended, but including URLs to data and code is permitted.
    \end{itemize}

\item {\bf Experimental Setting/Details}
    \item[] Question: Does the paper specify all the training and test details (e.g., data splits, hyperparameters, how they were chosen, type of optimizer, etc.) necessary to understand the results?
    \item[] Answer: \answerYes{} 
    \item[] Justification: Training and test details are specified in C1 and hyper-parameter tuning strategies are specified in C2. 
    \item[] Guidelines:
    \begin{itemize}
        \item The answer NA means that the paper does not include experiments.
        \item The experimental setting should be presented in the core of the paper to a level of detail that is necessary to appreciate the results and make sense of them.
        \item The full details can be provided either with the code, in appendix, or as supplemental material.
    \end{itemize}

\item {\bf Experiment Statistical Significance}
    \item[] Question: Does the paper report error bars suitably and correctly defined or other appropriate information about the statistical significance of the experiments?
    \item[] Answer: \answerYes{} 
    \item[] Justification: We run our experiments 10 times and report both mean and standard deviation of numbers we report. 
    \item[] Guidelines:
    \begin{itemize}
        \item The answer NA means that the paper does not include experiments.
        \item The authors should answer "Yes" if the results are accompanied by error bars, confidence intervals, or statistical significance tests, at least for the experiments that support the main claims of the paper.
        \item The factors of variability that the error bars are capturing should be clearly stated (for example, train/test split, initialization, random drawing of some parameter, or overall run with given experimental conditions).
        \item The method for calculating the error bars should be explained (closed form formula, call to a library function, bootstrap, etc.)
        \item The assumptions made should be given (e.g., Normally distributed errors).
        \item It should be clear whether the error bar is the standard deviation or the standard error of the mean.
        \item It is OK to report 1-sigma error bars, but one should state it. The authors should preferably report a 2-sigma error bar than state that they have a 96\% CI, if the hypothesis of Normality of errors is not verified.
        \item For asymmetric distributions, the authors should be careful not to show in tables or figures symmetric error bars that would yield results that are out of range (e.g. negative error rates).
        \item If error bars are reported in tables or plots, The authors should explain in the text how they were calculated and reference the corresponding figures or tables in the text.
    \end{itemize}

\item {\bf Experiments Compute Resources}
    \item[] Question: For each experiment, does the paper provide sufficient information on the computer resources (type of compute workers, memory, time of execution) needed to reproduce the experiments?
    \item[] Answer: \answerYes{} 
    \item[] Justification: Resources used are specified in Appendix C.3.
    \item[] Guidelines:
    \begin{itemize}
        \item The answer NA means that the paper does not include experiments.
        \item The paper should indicate the type of compute workers CPU or GPU, internal cluster, or cloud provider, including relevant memory and storage.
        \item The paper should provide the amount of compute required for each of the individual experimental runs as well as estimate the total compute. 
        \item The paper should disclose whether the full research project required more compute than the experiments reported in the paper (e.g., preliminary or failed experiments that didn't make it into the paper). 
    \end{itemize}
    
\item {\bf Code Of Ethics}
    \item[] Question: Does the research conducted in the paper conform, in every respect, with the NeurIPS Code of Ethics \url{https://neurips.cc/public/EthicsGuidelines}?
    \item[] Answer: \answerYes{} 
    \item[] Justification: We conform the code Of Ethics and have some discussion at the very end of our paper. 
    \item[] Guidelines:
    \begin{itemize}
        \item The answer NA means that the authors have not reviewed the NeurIPS Code of Ethics.
        \item If the authors answer No, they should explain the special circumstances that require a deviation from the Code of Ethics.
        \item The authors should make sure to preserve anonymity (e.g., if there is a special consideration due to laws or regulations in their jurisdiction).
    \end{itemize}

\item {\bf Broader Impacts}
    \item[] Question: Does the paper discuss both potential positive societal impacts and negative societal impacts of the work performed?
    \item[] Answer: \answerYes{} 
    \item[] Justification: We observe no ethical concern entailed by our proposal, but we note that both ethical or unethical applications based on collaborative filtering may benefit from the effectiveness of our work. 
Care should be taken to ensure socially positive and beneficial results of machine learning algorithms.
    \item[] Guidelines:
    \begin{itemize}
        \item The answer NA means that there is no societal impact of the work performed.
        \item If the authors answer NA or No, they should explain why their work has no societal impact or why the paper does not address societal impact.
        \item Examples of negative societal impacts include potential malicious or unintended uses (e.g., disinformation, generating fake profiles, surveillance), fairness considerations (e.g., deployment of technologies that could make decisions that unfairly impact specific groups), privacy considerations, and security considerations.
        \item The conference expects that many papers will be foundational research and not tied to particular applications, let alone deployments. However, if there is a direct path to any negative applications, the authors should point it out. For example, it is legitimate to point out that an improvement in the quality of generative models could be used to generate deepfakes for disinformation. On the other hand, it is not needed to point out that a generic algorithm for optimizing neural networks could enable people to train models that generate Deepfakes faster.
        \item The authors should consider possible harms that could arise when the technology is being used as intended and functioning correctly, harms that could arise when the technology is being used as intended but gives incorrect results, and harms following from (intentional or unintentional) misuse of the technology.
        \item If there are negative societal impacts, the authors could also discuss possible mitigation strategies (e.g., gated release of models, providing defenses in addition to attacks, mechanisms for monitoring misuse, mechanisms to monitor how a system learns from feedback over time, improving the efficiency and accessibility of ML).
    \end{itemize}
    
\item {\bf Safeguards}
    \item[] Question: Does the paper describe safeguards that have been put in place for responsible release of data or models that have a high risk for misuse (e.g., pretrained language models, image generators, or scraped datasets)?
    \item[] Answer: \answerNA{} 
    \item[] Justification: \answerNA{}
    \item[] Guidelines:
    \begin{itemize}
        \item The answer NA means that the paper poses no such risks.
        \item Released models that have a high risk for misuse or dual-use should be released with necessary safeguards to allow for controlled use of the model, for example by requiring that users adhere to usage guidelines or restrictions to access the model or implementing safety filters. 
        \item Datasets that have been scraped from the Internet could pose safety risks. The authors should describe how they avoided releasing unsafe images.
        \item We recognize that providing effective safeguards is challenging, and many papers do not require this, but we encourage authors to take this into account and make a best faith effort.
    \end{itemize}

\item {\bf Licenses for existing assets}
    \item[] Question: Are the creators or original owners of assets (e.g., code, data, models), used in the paper, properly credited and are the license and terms of use explicitly mentioned and properly respected?
    \item[] Answer: \answerYes{} 
    \item[] Justification: With in the folder of our code, we include the license of all code, data, and tools we use. 
    \item[] Guidelines:
    \begin{itemize}
        \item The answer NA means that the paper does not use existing assets.
        \item The authors should cite the original paper that produced the code package or dataset.
        \item The authors should state which version of the asset is used and, if possible, include a URL.
        \item The name of the license (e.g., CC-BY 4.0) should be included for each asset.
        \item For scraped data from a particular source (e.g., website), the copyright and terms of service of that source should be provided.
        \item If assets are released, the license, copyright information, and terms of use in the package should be provided. For popular datasets, \url{paperswithcode.com/datasets} has curated licenses for some datasets. Their licensing guide can help determine the license of a dataset.
        \item For existing datasets that are re-packaged, both the original license and the license of the derived asset (if it has changed) should be provided.
        \item If this information is not available online, the authors are encouraged to reach out to the asset's creators.
    \end{itemize}

\item {\bf New Assets}
    \item[] Question: Are new assets introduced in the paper well documented and is the documentation provided alongside the assets?
    \item[] Answer: \answerYes{} 
    \item[] Justification: We have docstrings for all functions in our code and we also provide a readme file to help others use our code. 
    \item[] Guidelines:
    \begin{itemize}
        \item The answer NA means that the paper does not release new assets.
        \item Researchers should communicate the details of the dataset/code/model as part of their submissions via structured templates. This includes details about training, license, limitations, etc. 
        \item The paper should discuss whether and how consent was obtained from people whose asset is used.
        \item At submission time, remember to anonymize your assets (if applicable). You can either create an anonymized URL or include an anonymized zip file.
    \end{itemize}

\item {\bf Crowdsourcing and Research with Human Subjects}
    \item[] Question: For crowdsourcing experiments and research with human subjects, does the paper include the full text of instructions given to participants and screenshots, if applicable, as well as details about compensation (if any)? 
    \item[] Answer: \answerNA{}{} 
    \item[] Justification: \answerNA{}
    \item[] Guidelines:
    \begin{itemize}
        \item The answer NA means that the paper does not involve crowdsourcing nor research with human subjects.
        \item Including this information in the supplemental material is fine, but if the main contribution of the paper involves human subjects, then as much detail as possible should be included in the main paper. 
        \item According to the NeurIPS Code of Ethics, workers involved in data collection, curation, or other labor should be paid at least the minimum wage in the country of the data collector. 
    \end{itemize}

\item {\bf Institutional Review Board (IRB) Approvals or Equivalent for Research with Human Subjects}
    \item[] Question: Does the paper describe potential risks incurred by study participants, whether such risks were disclosed to the subjects, and whether Institutional Review Board (IRB) approvals (or an equivalent approval/review based on the requirements of your country or institution) were obtained?
    \item[] Answer: \answerNA{} 
    \item[] Justification: \answerNA{}
    \item[] Guidelines:
    \begin{itemize}
        \item The answer NA means that the paper does not involve crowdsourcing nor research with human subjects.
        \item Depending on the country in which research is conducted, IRB approval (or equivalent) may be required for any human subjects research. If you obtained IRB approval, you should clearly state this in the paper. 
        \item We recognize that the procedures for this may vary significantly between institutions and locations, and we expect authors to adhere to the NeurIPS Code of Ethics and the guidelines for their institution. 
        \item For initial submissions, do not include any information that would break anonymity (if applicable), such as the institution conducting the review.
    \end{itemize}

\end{enumerate}

\end{document}